\theoremstyle{plain}
\newtheorem{theorem}{Theorem}[section]
\newtheorem{proposition}[theorem]{Proposition}
\theoremstyle{definition}
\theoremstyle{remark}
\icmltitlerunning{Quantum-Inspired Fidelity-based Divergence}
\begin{document}

\twocolumn[
\icmltitle{Quantum-Inspired Fidelity-based Divergence}


\icmlsetsymbol{equal}{*}

\begin{icmlauthorlist}
\icmlauthor{Yifeng Peng}{1}
\icmlauthor{Dantong Li}{2}
\icmlauthor{Xinyi Li}{3}
\icmlauthor{Zhiding Liang}{4}
\icmlauthor{Yongshan Ding}{5}
\icmlauthor{Ying Wang}{6}

\end{icmlauthorlist}
\icmlaffiliation{1}{}
\icmlaffiliation{2}{}
\icmlaffiliation{3}{}
\icmlaffiliation{4}{}
\icmlaffiliation{5}{}
\icmlaffiliation{6}{}


\icmlkeywords{Machine Learning, ICML}

\vskip 0.3in
]




\begin{abstract}
Kullback--Leibler (KL) divergence is a fundamental measure of the dissimilarity between two probability distributions, but it can become unstable in high-dimensional settings due to its sensitivity to mismatches in distributional support. To address robustness limitations, we propose a novel \textit{Quantum-Inspired Fidelity-based Divergence} (QIF), leveraging quantum information principles yet efficiently computable on classical hardware. Compared to KL divergence, QIF demonstrates improved numerical stability under partial or near-disjoint support conditions, thereby reducing the need for extensive regularization in specific scenarios. Moreover, QIF admits well-defined theoretical bounds and continuous similarity measures. Building on this, we introduce a novel regularization method, \textit{QR-Drop}, which utilizes QIF to improve generalization in machine learning models. Empirical results show that QR-Drop effectively mitigates overfitting and outperforms state-of-the-art methods. 

\end{abstract}

\begin{figure}[ht!]

\begin{center}
\centerline{\includegraphics[width=0.5\textwidth]{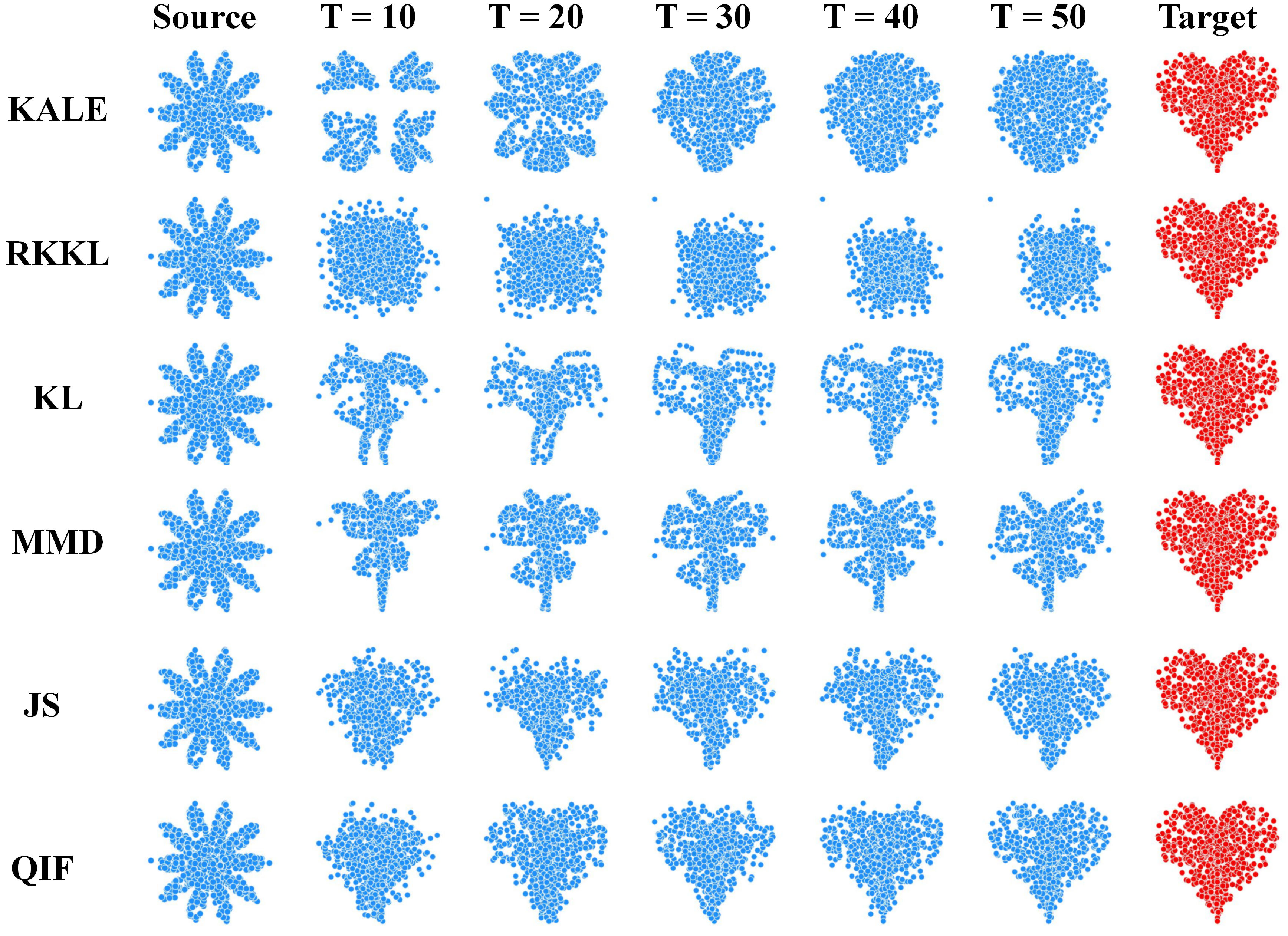}}
\caption{Distribution evolution of different divergence methods (KALE, RKKL, KL, MMD, JS, QIF) and sinkhorn distance during optimization. The blue distribution in the beginning stage noted as $\textcolor{blue}{\bm{\ast}}$ is the initial distribution of the input of all algorithms, the red heart distribution noted as $\textcolor{red}{\bm{\heartsuit}}$ is the target distribution, and T is the number of iterations with $\sigma = 0.3$, learning rate $ = 0.01$ and $1000$ sample points.}
\label{rkkl}

\end{center}

\end{figure}

\section{Introduction}
\label{Introduction}
A key challenge in machine learning is approximating an unknown probability distribution $Q$. In Bayesian inference \cite{gelman1995bayesian}, for example, one often seeks to approximate posterior distributions over model parameters for predictive tasks. This has led to the development of various techniques, including parametric methods such as variational inference~\cite{blei2017variational}, non-parametric approaches like Markov Chain Monte Carlo (MCMC)~\cite{roberts2004general}, and more recently, particle-based optimization~\cite{liu2016stein,korba2021kernel}. The Kullback-Leibler (KL) divergence \cite{kullback1951information} is a fundamental measure for assessing the quality of approximations in probability distribution learning. It quantifies the difference between two probability distributions, \( P \) and \( Q \), defined as
\begin{equation}
D_{\text{KL}}(P \| Q) = \sum_{i} P(i) \log \frac{P(i)}{Q(i)}.
\end{equation}
KL divergence has been extensively utilized in statistics, information theory, and machine learning due to its clear theoretical foundations and interpretability \cite{moreno2003kullback, van2014renyi} with complexity $\mathcal{O}(n)$. However, as data dimensions and volumes grow, traditional KL divergence encounters several mathematical and computational limitations that hinder its applicability in large-scale and high-dimensional settings, such as the "mode collapse" problem \cite{kingma2013auto} and cases requiring symmetric measures or robustness to distributional support differences \cite{mackay2003information}. One primary limitation of KL divergence is its sensitivity to the support of the distributions. Specifically, when the target distribution \( Q \) assigns zero probability to events that have non-zero probability under \( P \), the KL divergence becomes infinite. This sensitivity poses significant challenges in practical scenarios, especially in high-dimensional spaces where exact support overlap is rare \cite{gretton2012kernel, dauphin2014identifying}, leading to unstable and often uninterpretable results.

The Jensen-Shannon divergence \cite{menendez1997jensen} (JS divergence) is a symmetrized and smoothed alternative to the Kullback-Leibler (KL) divergence, defined for probability distributions \(P\) and \(Q\) as
\begin{equation}
\text{JS divergence}(P \parallel Q) = \frac{1}{2} \text{KL}(P \parallel M) + \frac{1}{2} \text{KL}(Q \parallel M),
\end{equation}
where \( M = \frac{1}{2}(P + Q) \). Mathematically, JS divergence offers the advantages of symmetry, boundedness between 0 and \(\log 2\), and finite values even when \(P\) and \(Q\) have non-overlapping supports, making it more stable and suitable for measuring mutual similarity. Additionally, by incorporating the average distribution \(M\), JS divergence helps mitigate issues like mode collapse with the same level of complexity compared to KL with $\mathcal{O}(n)$. However, JS divergence may face challenges in capturing nuanced differences between highly complex distributions due to its smoothing nature.

Recently, the Kernel Kullback-Leibler (KKL) divergence \cite{kkl} was introduced as an extension of the classical Kullback-Leibler (KL) divergence to kernel methods. For kernel matrices \( K_P \) and \( K_Q \) representing quantum states \( P \) and \( Q \), the KKL divergence can be defined as
\begin{equation}
\text{KKL}(P \parallel Q) = \text{Tr}\left(K_P \log K_P - K_P \log K_Q\right),
\end{equation}
where \(\text{Tr}(\cdot)\) denotes the trace operator. This formulation leverages the properties of semi-positive definite matrices, similar to density matrices in quantum information. It utilizes high-dimensional feature Hilbert spaces \(\mathcal{H}\) to enhance pattern recognition and information measurement capabilities, analogous to quantum computing operations \(\mathcal{H}\). However, KKL faces significant challenges, including numerical instability, overfitting, high sensitivity to noise, and computational complexity, which typically scales as \(\mathcal{O}(n^3)\) for matrix operations. The KKL requires that the support of distribution \(p\) be included in the support of distribution \(q\), i.e., \(\text{supp}(p) \subseteq \text{supp}(q)\). If this condition is violated, the KKL value becomes \(+\infty\), making it unsuitable for comparing distributions with non-overlapping supports. The root cause of this issue lies in the dependency of KKL on the covariance operator \(\Sigma_q\), where the kernel condition \(\text{Ker}(\Sigma_q) \subseteq \text{Ker}(\Sigma_p)\) must hold. When \(p\) and \(q\) have disjoint supports, this condition breaks down.

To address the limitations of KKL, a regularized version, \textit{RKKL}, was introduced \cite{regularizedkkl}:
\begin{equation}
\text{RKKL}_\alpha(P \| Q) = \text{KKL}\bigl(P \,\bigl\|\, (1 - \alpha)Q + \alpha P\bigr),
\end{equation}
where \(\alpha \in (0,1)\) is a regularization parameter that ensures \(\text{RKKL}_\alpha(P \| Q)\) remains finite even when \(P\) and \(Q\) have disjoint supports. The added mixture term prevents singularities, making RKKL well-defined for all distributions.

By introducing the mixture distribution $(1 - \alpha)Q + \alpha P$, RKKL guarantees non-zero probability density across all support regions, thus remaining well-defined even for disjoint $P$ and $Q$. This regularization removes the singularities in KKL while retaining the divergence property $\,\text{RKKL}_{\alpha}(P\|\!Q)=0\iff P=Q\,$. As $\alpha \to 0$, RKKL converges to the original KKL for overlapping supports. However, computing RKKL and its gradients requires $\mathcal{O}((n+m)^3)$ operations, and tuning $\alpha$ demands careful cross-validation to balance smoothness and accuracy, which can be computationally burdensome.

While alternative measures address some of these challenges, they introduce smoothing effects, regularization dependencies, or computational inefficiencies. We propose the Quantum-Inspired Fidelity-based Divergence (QIF). Unlike prior quantum-based divergences, QIF remains computationally efficient on classical hardware while maintaining robustness in high-dimensional learning tasks. Grounded in quantum information theory, QIF is defined as \begin{equation}
\text{QIF}(P \parallel Q) = - F(P, Q) \log F(P, Q),
\end{equation}
where \( P \) and \( Q \) are two classical distributions and $F$ indicates the fidelity. Different with quantum relative entropy (QRE) \cite{nielsen2010quantum}, \(\mathrm{Tr}\bigl[\rho(\log\rho - \log\sigma)\bigr]\), QIF uses the simple entropy-like expression (e.g., Shannon entropy, Von Neumann entropy \cite{shannon1948mathematical,von2013mathematische}) \(-x\log x\) with \(x=F(\rho_P,\rho_Q)\in [0,1]\). Consequently, it lies in \(\bigl[0,e^{-1}\bigr]\) and does not diverge even for disjoint supports, obviating the need for regularization parameters with the complexity \(\mathcal{O}(n)\) derived in Sec.\ref{sec:qif_classical}. As illustrated in Fig. \ref{rkkl}, leveraging fidelity-based computation mitigates computational bottlenecks and stabilizes performance, even for vanishing or disjoint supports. Inspired by quantum information, QIF is fully implementable on classical hardware discussed in Sec.\ref{sec:qif_classical}. Unlike traditional quantum divergences that require matrix exponentiation or density matrix calculations, QIF has less complexity.



 In deep learning, KL divergence is commonly used to align output distributions for regularization and to mitigate overfitting \cite{wager2013dropout,baldi2013understanding}. Dropout \cite{srivastava2014dropout} randomly deactivates neurons to improve robustness, yet its randomness can destabilize convergence. R-Drop \cite{wu2021rdrop} addresses this by applying two dropout masks and minimizing their KL divergence, enforcing output consistency. However, in high-dimensional scenarios with heavily skewed or partially disjoint distributions, KL becomes numerically unstable. So we propose \emph{QR-Drop}, which replaces KL with QIF. Since QIF \(\in [0,e^{-1}]\) does not require additional parameters to handle disjoint supports, it avoids divergence pitfalls while maintaining a smooth and well-defined gradient. QR-Drop can achieve more stable training, particularly in large-scale tasks where distributions differ significantly or exhibit sparse modes.

\vspace{-2mm}
\section{Related Work}
\label{relatedwork}

\textbf{Distribution Approximating} Kullback--Leibler (KL) divergence is a standard tool for quantifying discrepancies between probability distributions, yet it is often computationally expensive and numerically unstable in high dimensions \cite{kingma2013auto,mackay2003information,gretton2012kernel,dauphin2014identifying}. Variations such as the normalized KL divergence \cite{bonnici2020kullback} and the cumulative residual KL (CRKL) distance \cite{laguna2019entropic} mitigate issues of unboundedness and address nodal structures in quantum distributions. Still, their reliance on quantum representations can limit applicability in continuous settings. Alternative measures, including Maximum Mean Discrepancy (MMD) \cite{arbel2019maximummmd} and Kernel Adaptive Least-squares Estimator (KALE) \cite{glaser2021kale}, reduce computational overhead and offer robust theoretical guarantees, albeit with drawbacks related to kernel choice, optimization complexity, and sensitivity to local minima. In addition, Quantum Relative Entropy \cite{nielsen2010quantum,wilde2013quantum} extends KL divergence into the quantum domain, potentially enhancing precision \cite{vedral2002role}, but matrix logarithms, trace operations, and challenges of quantum state preparation currently limit its widespread application.

\textbf{Dropout Regularization}
Dropout-based regularization has been pivotal in mitigating overfitting and improving generalization in deep neural networks \cite{srivastava2014dropout,baldi2013understanding}. Early efforts introduced standard dropout \cite{srivastava2014dropout}, randomly zeroing neuron activations to prevent co-adaptation. Subsequent studies advanced dropout’s versatility: Adaptive dropout \cite{ba2013adaptive} dynamically tuned dropout rates, Bayesian-inspired dropout \cite{gal2016theoretically} improved uncertainty estimation (later combined with reinforcement learning \cite{gal2017concrete}), and attention-driven dropout layers \cite{choe2019attention} enhanced object localization. Most recently, learned dropout patterns \cite{pham2021autodropout} further boosted robustness and generalization. Structured dropout methods refined the granularity of deactivation. Spatial dropout \cite{tompson2015efficientspatialdrop} zeroed entire feature maps or channels, preserving spatial relationships crucial for CNNs. DropBlock \cite{ghiasi2018dropblock} eliminated contiguous regions to thwart trivial bypasses of dropout. Meanwhile, R-Drop \cite{wu2021rdrop} enforced prediction consistency across multiple forward passes with dropout masks, reducing variance and stabilizing training.
\vspace{-2mm}
\section{Proposed QIF Divergence Algorithm}

In this section, we formally define the proposed QIF divergence algorithm, outlining its theoretical foundation and computational properties. We describe how QIF leverages fidelity-based measures while maintaining efficiency in classical computation. 
\vspace{-10pt}
\subsection{Quantum Derivation of QIF}
\label{QFQIF3}
\textbf{Embedding Distributions into Quantum States} Let $p = (p_1, p_2, \ldots, p_d)$ and $q = (q_1, q_2, \ldots, q_d)$ be two classical probability distributions over $d$ outcomes, satisfying
$ p_i \ge 0, \quad \sum_{i=1}^d p_i = 1, q_j \ge 0$, $\sum_{j=1}^d q_j = 1$. To embed these distributions into quantum states (using amplitude encoding), we define:
$\ket{p} = \sum_{i=1}^d \sqrt{p_i} \,\ket{i}$, $ \ket{q} = \sum_{j=1}^d \sqrt{q_j} \,\ket{j}$. Each of these states is a unit vector in a Hilbert space $\mathcal{H}$ of dimension $d$, assuming $\sum_{i=1}^d p_i = 1 \quad \Rightarrow \quad \sum_{i=1}^d \abs{\sqrt{p_i}}^2 = 1$, and similarly for $q$.

\textbf{Representing the States as Density Matrices} In density-matrix form, the pure states $\ket{p}$ and $\ket{q}$ correspond to $\rho = \ket{p}\bra{p}, \sigma = \ket{q}\bra{q}
$. Since these are pure states, $\rho$ and $\sigma$ each have rank one and trace 1.

\textbf{Fidelity Between Two Pure States} For two density matrices $\rho$ and $\sigma$, the fidelity \cite{nielsen2010quantum} is defined as
\begin{equation}
\label{eqcalssicalfedility}
    F(\rho, \sigma) = \Bigl(\mathrm{Tr}\bigl(\sqrt{\sqrt{\rho}\,\sigma\,\sqrt{\rho}}\,\bigr)\Bigr)^2.
\end{equation}
However, when both $\rho$ and $\sigma$ are \textcolor{red}{pure states}, it becomes:
\begin{equation}
    F(\rho, \sigma) 
    = \abs{\langle p \mid q \rangle}^2.
\end{equation}
Substituting $\ket{p}$ and $\ket{q}$ into the inner product, we have $ \langle p \mid q \rangle = \sum_{i=1}^d \sqrt{p_i}\,\sqrt{q_i}$. Hence,
\begin{equation}
\label{eq:classicalqif}
    F(\rho, \sigma) = \biggl(\sum_{i=1}^d \sqrt{p_i}\,\sqrt{q_i}\biggr)^2.
\end{equation}

\textbf{Definition} We use an entropy-like definition as
\begin{equation}
\label{eq9}
    \mathrm{QIF}(\rho, \sigma)
    = -\,F(\rho, \sigma)\,\log\bigl(F(\rho, \sigma)\bigr).
\end{equation}
This expression is not the standard QRE (which is typically 
$\mathrm{Tr}(\rho(\log\rho - \log\sigma))$); rather, it is a simplified measure. Here, $F(\rho, \sigma)$ is the fidelity between the two pure states $\rho$ and $\sigma$:
\begin{equation}
\label{eq10}
    \mathrm{QIF}(\rho, \sigma)
    = -\,\abs{\langle p \mid q \rangle}^2
      \,\log\bigl(\abs{\langle p \mid q \rangle}^2\bigr).
\end{equation}
\textbf{Numerical Stability} In practice (such as in the provided code), to avoid taking the logarithm of zero, a small numerical constant $\varepsilon$ (e.g., $10^{-13}$) is added inside the clamp function:
\begin{equation}
\label{numericalsigma}
F(\rho,\sigma)
\;\leftarrow\;
\max\bigl(F(\rho,\sigma),\,\varepsilon\bigr).
\end{equation}
This ensures numerical stability when the fidelity is extremely small, different from the regularization parameter \(\alpha \in (0, 1)\) in KKL. We only need to make $\varepsilon$ small enough without adjusting. Detailed complexity analysis of amplitude embedding QIF. On IBM hardware, a SWAP test can measure fidelity in \(\mathcal{O}\bigl(n \,\frac{1}{\epsilon^2}\bigr)\), offering a practical alternative to full tomography.

\vspace{-10pt}
\subsection{Classical Reformulation of QIF}
\label{sec:qif_classical}
Quantum state preparation by amplitude embedding suffers from high costs for $2^n$ vector \cite{ciliberto2018quantum}. However, for \textcolor{red}{pure states}, when these density matrices are diagonal in the same basis, the quantum fidelity in Eq. \ref{eqcalssicalfedility} simplifies exactly to Eq. \ref{eq:classicalqif} as
\begin{equation}
\label{eq:bhattacharyya_coefficient}
F(P, Q) =  \Bigl(\sum_{i=1}^{d} \sqrt{p_i}\,\sqrt{q_i}\Bigr)^2,
\end{equation}
whose $\sum_{i=1}^d \sqrt{p_i}\,\sqrt{q_i}$ is closely related to the Bhattacharyya coefficient~\cite{bhattacharyya1943measure}, and it will form the core of our definition, which can be performed totally by classical computing. So, we can reformulate QIF by simplifying Eq. \ref{eqcalssicalfedility} in Sec.\ref{QFQIF3}.

Quantum computing offers huge advantages over traditional neural networks \cite{ding2020square,peng2024qsco, li2024efficient}. However, Noisy Intermediate-Scale Quantum (NISQ) \cite{preskill2018quantum} devices still suffer from noise and errors, creating challenges for pure quantum algorithms \cite{wang2022quantumnas, liang2022variational, liang2024napa}. According to Eq. \ref{eq:bhattacharyya_coefficient}, we avoid amplitude encoding of the probability distributions $p$ and $p$ to quantum state $\ket{p}$ and $\ket{q}$. 
\begin{figure}[ht!]
\begin{center}
\centerline{\includegraphics[width=0.45\textwidth]{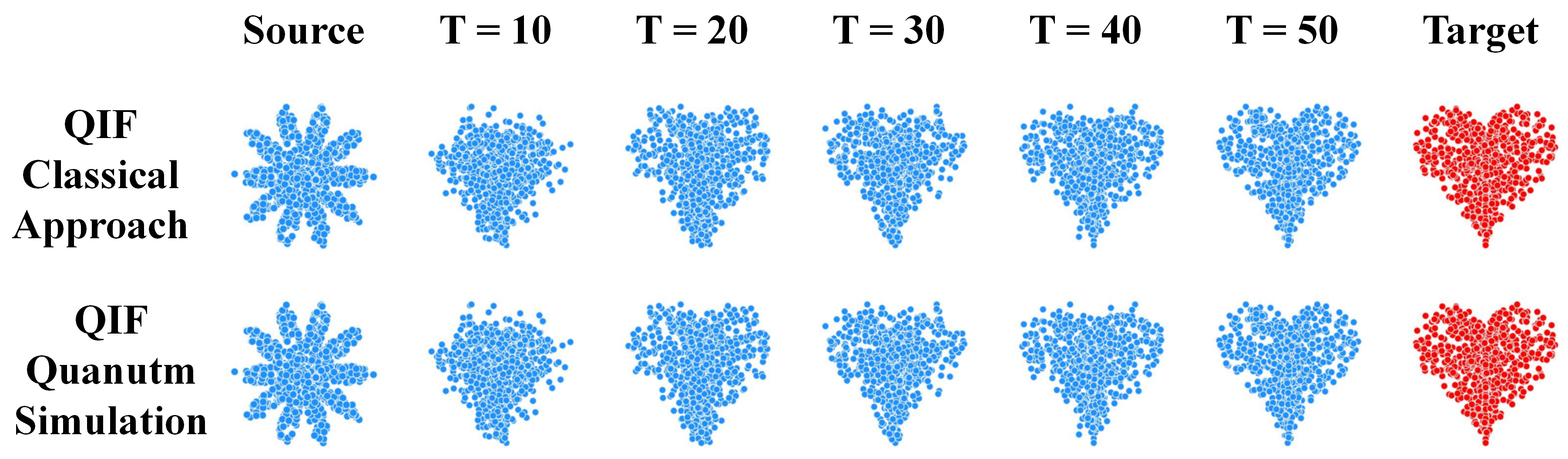}}
\caption{Comparison of quantum approach QIF by PennyLane \cite{pennylane} simulation and quantum-inspired QIF with the classical approach under the same setting.}
\label{comqifquantum}
\end{center}
\vspace{-25pt}
\end{figure}

As a result, fidelity reduces to a simple overlap between two probability vectors. Hence, the entire QIF measure, originally conceived in a quantum context, can be computed on classical hardware with no need for exponential dimension scaling. Then, we don't need to perform QIF using quantum hardware like measuring quantum fidelity by SWAP test or tomography and keep the same result in Fig. \ref{comqifquantum}. 
 
\textbf{Relation to Bhattacharyya Distance} Our construction leverages the Bhattacharyya coefficient (Eq.~\eqref{eq:bhattacharyya_coefficient}), which is well-known in information theory and statistics for measuring distributional overlap. One may also define the \emph{Bhattacharyya distance} \cite{bhattacharyya1943measure} as 
\begin{equation}
D_B(p,q)
\;=\;
-\ln \!\Bigl(\textstyle\sum_{i}\sqrt{p_i\,q_i}\Bigr).
\end{equation}
\textbf{Complexity} All operations (summing and logarithm) are \(O(n)\) with classical calculation. No \(2^n\)-dimensional vector or amplitude encoding step is required \cite{mottonen2004transformation}. 
\vspace{-10pt}

\section{Comparative Analysis of QIF and QRE}
In quantum information theory, QRE is regarded as the natural extension of the classical Kullback--Leibler divergence to the quantum realm~\cite{nielsen2010quantum}. While QRE retains several desirable properties of its classical counterpart, its application to pure quantum states reveals inherent limitations. Specifically, QRE becomes either zero or infinite for pure states, failing to provide a nuanced measure of similarity. In contrast, quantum fidelity offers a continuous similarity measure between quantum states \cite{horodecki2009quantum, bengtsson2017geometry}. Moreover, for pure states, quantum fidelity could be computed through a purely classical formulation, thus not only preserving fidelity's advantages but also reducing the computational overhead compared to a full quantum simulation.
\vspace{-10pt}
\subsection{Characteristic Analysis of QRE and QIF}

QRE between two density matrices $\rho$ and $\sigma$ is defined as:
\begin{equation}
S(\rho \| \sigma) = \text{Tr}(\rho (\log \rho - \log \sigma)),
\end{equation}
provided the support of $\rho$ is contained within $\sigma$.

\begin{proposition}
For pure states $\rho$ and $\sigma$, Quantum Relative Entropy is given by:
\begin{equation}
S(\rho \| \sigma) = 
\begin{cases}
0 & \text{if } \ket{\psi} = \ket{\phi} \\
+\infty & \text{otherwise}
\end{cases}
\end{equation}
\end{proposition}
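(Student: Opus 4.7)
The plan is to exploit the rank-one structure of pure states together with the standard convention that quantum relative entropy is defined only when $\mathrm{supp}(\rho) \subseteq \mathrm{supp}(\sigma)$ and is declared $+\infty$ otherwise. First I would reduce the expression: since $\rho = \ket{\psi}\bra{\psi}$ has spectrum $(1,0,\ldots,0)$ and the convention $0\log 0 = 0$ gives $\mathrm{Tr}(\rho\log\rho) = 0$, the QRE collapses to $S(\rho\|\sigma) = -\mathrm{Tr}(\rho\log\sigma) = -\bra{\psi}\log\sigma\ket{\psi}$. Everything then reduces to evaluating this one matrix element.

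Next I would split into two cases according to whether the states coincide up to a global phase. In the trivial case $\ket{\psi} = \ket{\phi}$ we have $\rho = \sigma$, and $\log\sigma$ restricted to the one-dimensional support of $\sigma$ acts as $\log 1 = 0$, so $\bra{\psi}\log\sigma\ket{\psi} = 0$ and $S(\rho\|\sigma) = 0$. For the nontrivial case where $\ket{\psi}$ is not proportional to $\ket{\phi}$, the support of $\rho$ (the line through $\ket{\psi}$) escapes the support of $\sigma$ (the line through $\ket{\phi}$), so QRE is $+\infty$ by convention.

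To make this divergence transparent rather than just invoking convention, I would carry out a short regularization argument: replace $\sigma$ by $\sigma_\varepsilon = (1-\varepsilon)\sigma + (\varepsilon/d) I$, which is full rank, and decompose $\ket{\psi} = \alpha\ket{\phi} + \beta\ket{\phi^{\perp}}$ with $|\alpha|^2 + |\beta|^2 = 1$. Then a direct eigenbasis computation gives
\begin{equation}
S(\rho\|\sigma_\varepsilon) = -|\alpha|^2 \log\!\Bigl(1 - \varepsilon + \tfrac{\varepsilon}{d}\Bigr) - |\beta|^2 \log\!\Bigl(\tfrac{\varepsilon}{d}\Bigr).
\end{equation}
If $\ket{\psi}$ and $\ket{\phi}$ are not phase-equivalent, then $|\beta|^2 > 0$ and the second term forces $S(\rho\|\sigma_\varepsilon) \to +\infty$ as $\varepsilon \to 0^+$, while if $\beta = 0$ one recovers the value $0$ in the limit. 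This confirms the claimed dichotomy.

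The only real subtlety, and the step I would be most careful about, is the handling of $\log\sigma$ when $\sigma$ is rank-deficient: one must either fix the extended-real convention at the outset or replace it with the limiting argument above. Either route is routine, but the limiting computation has the pedagogical benefit of making explicit why QRE is discontinuous on pure states, thereby reinforcing the paper's motivation for adopting a fidelity-based measure like QIF in place of QRE.
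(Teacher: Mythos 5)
Your proposal is correct and rests on the same core argument as the paper: since $\rho$ and $\sigma$ are rank-one, either $\ket{\psi}=\ket{\phi}$ (so $\rho=\sigma$ and $S=0$) or $\mathrm{supp}(\rho)\not\subseteq\mathrm{supp}(\sigma)$ (so $S=+\infty$ by the definition of quantum relative entropy). Your added regularization computation with $\sigma_\varepsilon=(1-\varepsilon)\sigma+(\varepsilon/d)I$ goes beyond the paper's two-line proof by exhibiting the divergence explicitly rather than invoking the support convention, and it also correctly handles equality up to a global phase, which the paper glosses over.
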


\begin{proof}
Since $\rho = \ket{\psi}\bra{\psi}$ and $\sigma = \ket{\phi}\bra{\phi}$ are pure states, their supports are one-dimensional. If $\ket{\psi} = \ket{\phi}$, then $\rho = \sigma$, and clearly $S(\rho \| \sigma) = 0$.

If $\ket{\psi} \neq \ket{\phi}$, the support of $\rho$ is not contained within the support of $\sigma$, leading to $S(\rho \| \sigma) = +\infty$.
\end{proof}

The binary nature of QRE for pure states renders it ineffective as a similarity measure. It fails to capture the degree of similarity when $\ket{\psi} \neq \ket{\phi}$, as it only distinguishes between identical and non-identical states.

\textbf{Complexity Analysis} QRE for pure states primarily involves the computation of the inner product, resulting in a computational complexity of $O(n)$, and mixed states is $O(n^3)$. 

\begin{theorem}
For two probability distributions $p$ and $q$, QIF divergence provides a continuous similarity measure. 
\end{theorem}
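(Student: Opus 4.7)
The plan is to prove continuity of $\mathrm{QIF}(p,q) = h(F(p,q))$ by decomposing it into the composition of two continuous maps: the fidelity $F(p,q) = \bigl(\sum_i \sqrt{p_i q_i}\bigr)^2$ derived in Eq.~\eqref{eq:classicalqif}, and the entropy-like function $h(x) = -x\log x$. Joint continuity of $F$ on the product of probability simplices follows immediately from continuity of $\sqrt{\cdot}$ on $[0,\infty)$, together with the closure of continuity under finite sums, products, and squaring. By Cauchy--Schwarz applied to the vectors $(\sqrt{p_i})_i$ and $(\sqrt{q_i})_i$, the image of $F$ lies in $[0,1]$, which is exactly the domain on which we need $h$ to behave well.

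The subtle point is the behavior of $h$ at the boundary $x=0$, where $\log x$ is undefined. I would extend $h$ to $[0,1]$ by setting $h(0):=0$, justified by the standard limit $\lim_{x\to 0^+} -x\log x = 0$ (via L'H\^opital, or by writing $x\log x = \log(x^x)\to \log 1 = 0$). With this convention, $h$ becomes continuous on the closed interval $[0,1]$, smooth on $(0,1]$, vanishes at both endpoints, and attains its maximum $e^{-1}$ at $x=e^{-1}$. Composing with $F$ then yields joint continuity of $\mathrm{QIF}$ on $\Delta^{d-1}\times\Delta^{d-1}$. The small constant $\varepsilon$ introduced in Eq.~\eqref{numericalsigma} is purely a numerical safeguard for finite-precision arithmetic; it plays no role in the continuity argument because the analytic extension already assigns the correct limiting value $0$ at $F=0$.

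To emphasize the word \emph{similarity} in the statement, I would contrast the resulting behavior with the preceding proposition on QRE: whereas $S(\rho\|\sigma)$ jumps discontinuously from $0$ to $+\infty$ as $\ket{\psi}$ moves off $\ket{\phi}$, the value $\mathrm{QIF}(p,q)$ varies smoothly with both arguments, taking finite values everywhere on $[0,e^{-1}]$ and interpolating between the two regimes $F=1$ (identical) and $F=0$ (orthogonal supports). This is exactly the qualitative gap between a binary indicator and a bona fide continuous similarity measure.

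The main obstacle is less computational than definitional: making precise what "continuous similarity measure" means so that the claim becomes a theorem rather than a slogan. Once we commit to joint continuity of the map $(p,q)\mapsto \mathrm{QIF}(p,q)$ on the product of simplices (equipped with the Euclidean subspace topology), the remainder is an assembly of standard continuity lemmas, with the only real analytic content being the removable singularity of $-x\log x$ at $x=0$ — precisely the feature that distinguishes QIF from the KL, KKL, and QRE family, all of which suffer an unavoidable $\log$-singularity on disjoint supports.
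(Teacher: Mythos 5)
Your proposal is correct and follows the same basic route as the paper: express $\mathrm{QIF}$ as the entropy-like map $-x\log x$ applied to the fidelity $F(p,q)=\bigl(\sum_i\sqrt{p_i q_i}\bigr)^2\in[0,1]$, and conclude boundedness in $[0,e^{-1}]$ and continuity. In fact you go further than the paper's own argument, which essentially stops at the boundedness observation and asserts continuity; your explicit treatment of the joint continuity of $F$ on the product simplex and of the removable singularity of $-x\log x$ at $x=0$ (setting $h(0)=0$) supplies exactly the details the paper leaves implicit.
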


\begin{proof}
Since $p$ and $q$ are probability distributions, all $p_i,q_i \ge 0$ and $\sum_i p_i = \sum_i q_i = 1$. We define the classical fidelity as Eq. \ref{eq:bhattacharyya_coefficient},
which lies in the interval $[0,\,1]$. Consequently,
\[
D_{\mathrm{QIF}}(p\|q) 
\;=\;
-\,F(p,q)\,\log\!\bigl(F(p,q)\bigr)
\;\in\;
[0,e^{-1}],
\]
because the function $-\,x \log(x)$ attains its maximum $1/e$ on $x \in [0,1]$. Thus $D_{\mathrm{QIF}}$ provides a \emph{nuanced}, \emph{continuous} measure of similarity between $p$ and $q$.
\end{proof}


\begin{proposition}
Quantum Relative Entropy \( S(\rho \| \sigma) \) is \textbf{lower semi-continuous} in the pair \((\rho, \sigma)\).
\end{proposition}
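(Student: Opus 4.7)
The plan is to express $S(\rho\|\sigma)$ as the pointwise supremum of a family of jointly lower semi-continuous (LSC) functionals on pairs of density matrices; the conclusion then follows from the general topological fact that a supremum of LSC functions is LSC. The family I would use is the quantum R\'enyi divergences of order $\alpha\in(0,1)$,
\[
D_\alpha(\rho\|\sigma) \;=\; \frac{1}{\alpha-1}\,\log \text{Tr}\bigl(\rho^{\alpha}\,\sigma^{1-\alpha}\bigr),
\]
which are well suited because the fractional matrix powers $\rho^{\alpha}$ and $\sigma^{1-\alpha}$ avoid the problematic inverse of $\sigma$ that appears directly in $S(\rho\|\sigma)$.

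Carrying this out breaks into three steps. First, for each fixed $\alpha\in(0,1)$ I would verify joint LSC of $D_\alpha$: the maps $\rho\mapsto\rho^{\alpha}$ and $\sigma\mapsto\sigma^{1-\alpha}$ are continuous because $x\mapsto x^{\beta}$ is continuous on $[0,\infty)$ for every $\beta>0$, so $T(\rho,\sigma):=\text{Tr}(\rho^{\alpha}\sigma^{1-\alpha})$ is a continuous nonnegative scalar. Then $D_\alpha=-(1-\alpha)^{-1}\log T$ is the composition of $T$ with the map $x\mapsto -\log x/(1-\alpha)$, which is continuous on $(0,\infty]$ and takes the value $+\infty$ at $x=0$; this composition is therefore LSC. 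Second, I would invoke the standard monotonicity of quantum R\'enyi divergences in $\alpha$ together with the limit $\lim_{\alpha\to 1^{-}} D_\alpha(\rho\|\sigma) = S(\rho\|\sigma)$ to obtain the representation $S(\rho\|\sigma)=\sup_{\alpha\in(0,1)}D_\alpha(\rho\|\sigma)$. Third, I would apply the general fact that a pointwise supremum of LSC functions is LSC to conclude.

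The main obstacle is the middle step, specifically establishing $\lim_{\alpha\to 1^{-}} D_\alpha=S$ in the case where the supports of $\rho$ and $\sigma$ overlap only partially. In the two extremes the limit is straightforward (L'Hopital's rule when $\text{supp}(\rho)\subseteq\text{supp}(\sigma)$, and $D_\alpha\equiv +\infty$ for all $\alpha\in(0,1)$ when the supports are fully disjoint), but the intermediate regime requires decomposing $\rho$ against the support projection $P_{\sigma}$, bounding the off-support contribution $(I-P_{\sigma})\rho(I-P_{\sigma})$, and showing it forces $T(\rho,\sigma)\to 0$ at the correct rate to drive $D_\alpha\to +\infty$ as $\alpha\to 1^{-}$.

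As a backup route avoiding the limit analysis altogether, one could instead use the Petz quantum Donsker--Varadhan variational formula, which writes $S(\rho\|\sigma)$ as a supremum over bounded self-adjoint $H$ of maps of the form $\text{Tr}(\rho H)-\log\text{Tr}(e^{\log\sigma+H})$; each such map is jointly continuous in $(\rho,\sigma)$ (with the rank-deficient case handled by a standard regularization), so the LSC conclusion for $S$ is again immediate from "sup of continuous is LSC."
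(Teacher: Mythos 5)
The paper states this proposition without any proof (it is followed only by a restatement of what lower semi-continuity means), so there is no argument of the authors' to compare against; your proposal would in fact supply the missing justification, and its main route is sound. Each Petz--R\'enyi divergence $D_\alpha(\rho\|\sigma)=\frac{1}{\alpha-1}\log\mathrm{Tr}(\rho^\alpha\sigma^{1-\alpha})$ with $\alpha\in(0,1)$ is jointly continuous as an extended-real-valued function (fractional powers are continuous on positive semidefinite matrices in finite dimension, and $x\mapsto-\log x/(1-\alpha)$ extends continuously to $+\infty$ at $x=0$), monotonicity in $\alpha$ and the limit $\alpha\to1^-$ give $S=\sup_{\alpha\in(0,1)}D_\alpha$, and a supremum of lower semi-continuous functions is lower semi-continuous.

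One correction to the step you flag as the main obstacle: in the partial-overlap case the mechanism is not that $T(\rho,\sigma)=\mathrm{Tr}(\rho^\alpha\sigma^{1-\alpha})$ tends to $0$. Since $\sigma^{1-\alpha}\to P_\sigma$ (the support projection) as $\alpha\to1^-$, one has $T(\rho,\sigma)\to\mathrm{Tr}(P_\sigma\rho)\in(0,1)$ whenever $\mathrm{supp}(\rho)\not\subseteq\mathrm{supp}(\sigma)$ but the supports are not orthogonal; the divergence $D_\alpha\to+\infty$ instead comes from the prefactor $\frac{1}{\alpha-1}\to-\infty$ multiplying $\log T$, which is bounded above by a strictly negative constant near $\alpha=1$. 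So the intermediate regime is easier than you anticipate and needs no rate analysis of the off-support block. Be more careful with the backup route: the Petz variational formula with $\mathrm{Tr}\,e^{\log\sigma+H}$ requires $\sigma$ faithful for $\log\sigma$ to be defined, and the individual functionals are not obviously jointly continuous at rank-deficient $\sigma$; moreover the tempting replacement $\log\mathrm{Tr}(\sigma e^{H})$ yields the measured relative entropy, which is strictly smaller than $S$ in general, so that variant would not prove the claim without additional work. The R\'enyi-supremum route is the cleaner one to keep.
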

Lower semi-continuity implies that for any convergent sequence \((\rho_n, \sigma_n) \to (\rho, \sigma)\), the following holds:
\begin{equation}
    \liminf_{n \to \infty} S(\rho_n \| \sigma_n) \geq S(\rho \|\sigma).
\end{equation}
This ensures that the relative entropy does not "drop" in the limit but may "increase".

\begin{proposition}
Quantum Relative Entropy \( S(\rho \| \sigma) \) is \textbf{continuous} at \((\rho, \sigma)\) provided that the support of \(\rho\) is contained within the support of \(\sigma\), i.e., \(\text{supp}(\rho) \subseteq \text{supp}(\sigma)\).
\end{proposition}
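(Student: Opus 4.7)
The plan is to reduce the problem to a sum of two terms that can be handled separately. Write $S(\rho\|\sigma) = -S(\rho) - \mathrm{Tr}(\rho \log \sigma)$, where $S(\rho) = -\mathrm{Tr}(\rho\log\rho)$ is the von Neumann entropy. In finite dimensions the von Neumann entropy is continuous on the whole state space (by Fannes' inequality, or more elementarily because eigenvalues depend continuously on $\rho$ by Weyl and $x \mapsto -x\log x$ is continuous on $[0,1]$ under the convention $0\log 0 = 0$). So the remaining task is to show that the cross term $(\rho, \sigma) \mapsto \mathrm{Tr}(\rho \log \sigma)$ is continuous at $(\rho, \sigma)$ along any sequence $(\rho_n, \sigma_n) \to (\rho, \sigma)$ for which $S(\rho_n\|\sigma_n)$ is finite, i.e., $\mathrm{supp}(\rho_n) \subseteq \mathrm{supp}(\sigma_n)$.

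For the cross term I would work with the spectral decomposition of $\sigma$. Let $P$ project onto $\mathrm{supp}(\sigma)$, with $\lambda_{\min} > 0$ the smallest nonzero eigenvalue of $\sigma$; by hypothesis $\rho = P\rho P$. Weyl's inequality shows that, for $n$ sufficiently large, exactly $r = \mathrm{rank}(\sigma)$ eigenvalues of $\sigma_n$ lie above $\lambda_{\min}/2$ while the remaining $d - r$ eigenvalues tend to $0$. Letting $P_n$ denote the spectral projector of $\sigma_n$ onto the ``large'' eigenspace, Kato's perturbation theory for isolated spectral subsets yields $P_n \to P$ in operator norm. Splitting
\begin{equation}
\mathrm{Tr}(\rho_n \log \sigma_n) = \mathrm{Tr}(P_n \rho_n P_n \log \sigma_n) + \mathrm{Tr}((I - P_n)\rho_n (I - P_n)\log \sigma_n),
\end{equation}
the first summand converges to $\mathrm{Tr}(\rho \log \sigma)$ because $\log$ is continuous on $[\lambda_{\min}/2, 1]$ and $P_n \sigma_n P_n \to P\sigma P$ uniformly on that interval.

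The main obstacle is the second summand, where $(I - P_n)\log \sigma_n$ diverges as the small eigenvalues of $\sigma_n$ collapse to zero. Here the support hypothesis is essential: since $(I - P)\rho(I - P) = 0$ exactly, combining $\rho_n \to \rho$ with $P_n \to P$ yields $\|(I - P_n)\rho_n (I - P_n)\| \to 0$, and the real work is to show that this decay outpaces the logarithmic blow-up $|\log \delta_n|$, where $\delta_n$ denotes the smallest nonzero eigenvalue of $\sigma_n$. This is the most delicate step and is precisely where the finiteness condition $\mathrm{supp}(\rho_n) \subseteq \mathrm{supp}(\sigma_n)$ must enter quantitatively. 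I would close the estimate either by a direct spectral bound combined with the integral representation $\log x = \int_0^\infty [(1+t)^{-1} - (x+t)^{-1}]\,dt$ and dominated convergence, or by invoking Lindblad's continuity theorem as a black box. Combining the two summands with the entropy piece then gives $S(\rho_n\|\sigma_n) \to S(\rho\|\sigma)$, as required.
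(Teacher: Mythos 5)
The paper itself states this proposition without any proof, so your attempt can only be judged on its own terms; unfortunately the step you yourself flag as ``the most delicate'' is not merely delicate but cannot be closed, because the claimed continuity is false in the stated generality when $\sigma$ is rank-deficient. The support hypothesis $\mathrm{supp}(\rho)\subseteq\mathrm{supp}(\sigma)$ constrains only the limit point; it gives no quantitative control on how fast the small eigenvalues $\delta_n$ of $\sigma_n$ collapse relative to the off-support mass of $\rho_n$. Concretely, in a single qubit take $\rho=\sigma=\ket{0}\bra{0}$, $\rho_n=(1-\tfrac1n)\ket{0}\bra{0}+\tfrac1n\ket{1}\bra{1}$ and $\sigma_n=(1-e^{-n^2})\ket{0}\bra{0}+e^{-n^2}\ket{1}\bra{1}$. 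Then $\mathrm{supp}(\rho_n)\subseteq\mathrm{supp}(\sigma_n)$ for every $n$, $(\rho_n,\sigma_n)\to(\rho,\sigma)$, the limit pair satisfies the support condition, and yet $S(\rho_n\|\sigma_n)\ \ge\ \tfrac1n\bigl(n^2-\log n\bigr)-1\ \to\ +\infty$ while $S(\rho\|\sigma)=0$. In your notation, $\|(I-P_n)\rho_n(I-P_n)\|=\tfrac1n$ does decay, but $|\log\delta_n|=n^2$ blows up faster, so your second summand diverges; no integral representation, dominated-convergence argument, or appeal to a ``Lindblad continuity theorem'' can rescue the estimate, because the conclusion itself fails. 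What survives at singular $\sigma$ is only the lower semicontinuity recorded in the preceding proposition of the paper.

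The statement does become true --- and your decomposition then closes with almost no work --- under a stronger hypothesis, e.g.\ that $\sigma$ has full rank, or that one restricts to sequences with $\mathrm{supp}(\sigma_n)\subseteq\mathrm{supp}(\sigma)$, a uniform lower bound on the smallest nonzero eigenvalue of $\sigma_n$, or a uniform domination $\rho_n\le\lambda\,\sigma_n$. In the full-rank case one eventually has $\sigma_n\ge\tfrac{\lambda_{\min}}{2}I$, so $\log\sigma_n\to\log\sigma$ in operator norm by continuity of the functional calculus on $[\lambda_{\min}/2,\,1]$, hence $\mathrm{Tr}(\rho_n\log\sigma_n)\to\mathrm{Tr}(\rho\log\sigma)$ directly; combined with your (correct) first step, continuity of the von Neumann entropy via $S(\rho\|\sigma)=-S(\rho)-\mathrm{Tr}(\rho\log\sigma)$, this gives $S(\rho_n\|\sigma_n)\to S(\rho\|\sigma)$ without Kato projectors at all. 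So the fix is not a sharper estimate on your second summand but an amended hypothesis: either assume $\sigma$ invertible (or an equivalent eigenvalue/domination condition along the sequence), or weaken the conclusion to continuity restricted to such a set; as literally stated, the proposition cannot be proved and your strategy cannot be completed.
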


\begin{proposition}
Quantum Relative Entropy \( S(\rho \| \sigma) \) is \textbf{not continuous} in general. Specifically, if \(\text{supp}(\rho)\) is not contained in \(\text{supp}(\sigma)\), then \( S(\rho \| \sigma) \) can exhibit discontinuities.
\end{proposition}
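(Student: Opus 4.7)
The plan is to disprove continuity by exhibiting one convergent sequence along which $S(\rho_n\|\sigma_n)$ fails to converge to $S(\rho\|\sigma)$. I would work in $\mathcal{H}=\mathbb{C}^{2}$, where the obstruction coming from the support condition is already visible, and rely on the pure-state dichotomy $S(\rho\|\sigma)\in\{0,+\infty\}$ (the first proposition of this section) as the main analytic tool, so that no direct manipulation of matrix logarithms is required.

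First, I would pick a support-compatible limit point with a finite QRE: take $\rho=\sigma=\ket{0}\bra{0}$, so that $S(\rho\|\sigma)=0$. Next, I would construct a perturbation that moves the support while keeping both states pure: set $\sigma_n=\ket{\phi_n}\bra{\phi_n}$ with $\ket{\phi_n}=\cos(1/n)\ket{0}+\sin(1/n)\ket{1}$. As $n\to\infty$, $\ket{\phi_n}\to\ket{0}$, so $\sigma_n\to\sigma$ in any matrix norm. For every finite $n$, however, $\ket{\phi_n}\neq\ket{0}$, so $\sigma_n$ is a rank-one projection onto a different line and hence $\text{supp}(\rho)\not\subseteq\text{supp}(\sigma_n)$. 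Invoking the pure-state proposition gives $S(\rho\|\sigma_n)=+\infty$ for every $n$, while $S(\rho\|\sigma)=0$; the sequence values and the limit value disagree, and the mechanism is exactly the support-inclusion failure highlighted in the proposition.

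The main obstacle I anticipate is a phrasing issue rather than a technical one: the stated hypothesis asks that $\text{supp}(\rho)\not\subseteq\text{supp}(\sigma)$ at the point in question, but the construction above has the support-incompatibility arising along the sequence while the limit itself is support-compatible. I would reconcile this by emphasizing that the discontinuity is driven by the $+\infty$ branch of the QRE convention on support-incompatible pairs, which is precisely the regime the proposition singles out; any limit point bordering this regime inherits a jump. The tempting alternative, namely approaching a support-incompatible limit pair from within the support-compatible region along a sequence with bounded $S(\rho_n\|\sigma_n)$, is unavailable because it is ruled out by the lower semi-continuity asserted in the preceding proposition, so the upper-semi-continuity failure sketched above is the natural and essentially unique route to the discontinuity.
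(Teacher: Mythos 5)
The paper states this proposition without any proof, so there is nothing to compare against; judged on its own, your argument is correct and is essentially the standard counterexample. Taking $\rho=\sigma=\ket{0}\bra{0}$ and $\sigma_n=\ket{\phi_n}\bra{\phi_n}$ with $\ket{\phi_n}=\cos(1/n)\ket{0}+\sin(1/n)\ket{1}$, you get $\sigma_n\to\sigma$, $S(\rho\|\sigma_n)=+\infty$ for every $n$ by the pure-state dichotomy, and $S(\rho\|\sigma)=0$, which refutes joint continuity; and your third paragraph correctly diagnoses why this is the only available mechanism: at a limit pair with $\mathrm{supp}(\rho)\not\subseteq\mathrm{supp}(\sigma)$ the value is $+\infty$ and lower semi-continuity forces every approaching sequence to diverge to $+\infty$ as well, so in the extended-real sense there is no discontinuity \emph{at} such points --- the genuine failure is of upper semi-continuity at finite-valued points bordering the support-incompatible regime, exactly as you construct. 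Your reconciliation of the proposition's loose phrasing (support failure along the sequence rather than at the limit) is therefore the right reading. One observation worth making explicit: your discontinuity point $(\ket{0}\bra{0},\ket{0}\bra{0})$ satisfies $\mathrm{supp}(\rho)\subseteq\mathrm{supp}(\sigma)$, so the same example shows that the paper's preceding proposition asserting continuity under mere support inclusion needs a stronger hypothesis (e.g., $\sigma$ of full rank, or convergence restricted to pairs satisfying the support condition); this does not weaken your proof of non-continuity, but it is a tension between your counterexample and the surrounding claims that a careful write-up should note.
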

\vspace{-10pt}

\subsection{Advantages of QIF for Machine Learning Task}
To demonstrate the advantages of QIF in offering a bounded and continuous measure of similarity, we present the following proofs and explanations.

\begin{proposition}
The QIF Divergence \( D_{\text{QIF}}(p\| q) = -F(p, q) \log F(p, q) \) is bounded within \([0,e^{-1}]\).
\end{proposition}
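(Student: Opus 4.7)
The plan is to split the argument into two independent parts: first, establish the range of the fidelity $F(p,q)$, and second, analyze the scalar function $f(x) = -x\log x$ on that range.

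For the first part, I would show $F(p,q) \in [0,1]$. Non-negativity is immediate from the definition in Eq.~\eqref{eq:bhattacharyya_coefficient}, since $p_i, q_i \ge 0$ forces every term $\sqrt{p_i q_i}$ to be non-negative. For the upper bound, I would apply the Cauchy--Schwarz inequality to the vectors $(\sqrt{p_i})_i$ and $(\sqrt{q_i})_i$:
\begin{equation}
\Bigl(\sum_{i=1}^{d}\sqrt{p_i}\sqrt{q_i}\Bigr)^{2}
\le \Bigl(\sum_{i=1}^{d} p_i\Bigr)\Bigl(\sum_{i=1}^{d} q_i\Bigr) = 1,
\end{equation}
so $F(p,q) \le 1$, with equality iff $p = q$.

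For the second part, I would treat $F(p,q)$ as a scalar $x \in [0,1]$ and maximize $f(x) = -x\log x$ on this interval. Differentiating gives $f'(x) = -\log x - 1$, whose unique zero on $(0,1]$ is $x^{\star} = e^{-1}$. Since $f''(x) = -1/x < 0$ on $(0,1]$, the function is strictly concave, so $x^\star$ is the global maximizer. Evaluating at the critical point and boundaries yields $f(e^{-1}) = e^{-1}$, $f(1) = 0$, and $f(x) \to 0$ as $x \to 0^{+}$ (using the standard convention $0\log 0 := 0$), so $f(x) \in [0, e^{-1}]$ throughout $[0,1]$.

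Combining the two parts, substituting $x = F(p,q) \in [0,1]$ into $f$ gives $D_{\text{QIF}}(p\|q) = f(F(p,q)) \in [0, e^{-1}]$, completing the proof. The argument has no real obstacle; the only subtlety worth flagging is justifying the boundary behaviour at $F(p,q) = 0$ via the limit $\lim_{x\to 0^{+}} x\log x = 0$, which is precisely the reason the numerical clamp in Eq.~\eqref{numericalsigma} suffices without altering the bound.
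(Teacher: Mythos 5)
Your proposal is correct and follows essentially the same route as the paper: the paper's argument (given in its proof of the continuity theorem) likewise reduces to noting that the fidelity of Eq.~\eqref{eq:bhattacharyya_coefficient} lies in $[0,1]$ and that $-x\log x$ attains its maximum $1/e$ on that interval. You simply supply the details the paper leaves implicit, namely the Cauchy--Schwarz bound $F(p,q)\le 1$, the first- and second-derivative analysis of $-x\log x$, and the convention $0\log 0 = 0$ at the boundary.
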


\begin{theorem}
The boundedness and continuity of \( D_{\text{QIF}}(p \|q) \) enable the effective use of gradient-based optimization techniques in machine learning algorithms.
\end{theorem}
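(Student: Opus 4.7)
The plan is to translate boundedness and continuity of $D_{\mathrm{QIF}}$ into the two analytic properties that first-order optimizers actually exploit: a bounded, well-defined gradient everywhere in parameter space, and Lipschitz smoothness of the loss surface. First I would write $D_{\mathrm{QIF}}(p\|q) = \phi(F(p,q))$ where $\phi(x) = -x\log x$ and $F(p,q) = \bigl(\sum_i \sqrt{p_i q_i}\bigr)^2$, and apply the chain rule with respect to a generic parameter $\theta$ that parameterizes $p$ (typically the pre-softmax logits of a network). Using $\phi'(x) = -\log x - 1$, this gives
\begin{equation}
\nabla_\theta D_{\mathrm{QIF}} = \bigl(-\log F(p,q) - 1\bigr)\,\nabla_\theta F(p,q).
\end{equation}

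Next I would bound each factor. Thanks to the numerical safeguard in Eq. \ref{numericalsigma}, $F(p,q) \in [\varepsilon,1]$, so $|\phi'(F)| \le |\log \varepsilon| + 1$ is a finite, user-controlled constant; this is exactly the step where the role of $\varepsilon$ as a stability device, rather than a tuning parameter like $\alpha$ in RKKL, becomes the crucial distinction from KL-type losses. The inner gradient reduces to $\nabla_\theta F = 2\,B(p,q)\sum_i \partial_\theta \sqrt{p_i q_i}$ with the Bhattacharyya coefficient $B(p,q) \in [0,1]$, so $\|\nabla_\theta D_{\mathrm{QIF}}\|$ is finite whenever $\theta \mapsto p$ has bounded sensitivity, which is automatic for softmax outputs.

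Then I would verify $L$-smoothness by differentiating once more: $\phi''(x) = -1/x$ is bounded on $[\varepsilon,1]$ by $1/\varepsilon$, so the Hessian of $D_{\mathrm{QIF}}$ in $\theta$ has a bounded operator norm, yielding a global Lipschitz constant for $\nabla D_{\mathrm{QIF}}$. Combined with $D_{\mathrm{QIF}} \le e^{-1}$ from the preceding proposition, the loss lies in the regime covered by classical convergence guarantees for (stochastic) gradient descent on nonconvex smooth objectives, e.g.\ the standard $\mathcal{O}(1/\sqrt{T})$ rate to an $\varepsilon$-stationary point; this is what I would invoke to formalize "effective use".

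The main obstacle is the informality of the statement itself, not any individual calculation: I must first fix what "effective use" means. My plan is to interpret it as the conjunction of (i) the gradient is well-defined and bounded everywhere, (ii) the objective is $L$-smooth, and (iii) the sublevel sets are bounded, since these are exactly the hypotheses under which standard first-order convergence results apply. The slightly delicate point is the failure mode $F \to 0$: without the clamp, $\phi'(F)$ diverges and the proof collapses, so I would be careful to emphasize that boundedness of $D_{\mathrm{QIF}}$ alone is not sufficient, and it is the \emph{joint} use of boundedness, continuity, and the $\varepsilon$-stabilized Lipschitz smoothness that yields the conclusion.
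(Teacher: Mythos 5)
Your proposal is correct and follows essentially the same route as the paper's (informal) justification: both hinge on writing $D_{\mathrm{QIF}}=\phi(F)$ with $\phi(x)=-x\log x$, observing that $\phi'(F)=-\log F-1$ is finite and continuous for $F\in(0,1]$, and handling the only failure mode $F\to 0$ via the clamp in Eq.~\ref{numericalsigma}. You go further than the paper by making "effective use" precise (bounded gradients, $L$-smoothness via $\phi''(x)=-1/x$ on $[\varepsilon,1]$, and standard nonconvex SGD convergence rates), which is a strengthening rather than a different approach.
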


Gradient-based methods require the loss function to be smooth and differentiable. Since $D_{\text{QIF}}(p \| q)=-\, F\,\log(F)$ is bounded within \(\bigl[0,1/e\bigr]\), its overall scale remains finite, mitigating issues such as unbounded losses. Moreover, $\frac{d}{dF}\Bigl(-\,F\,\log(F)\Bigr)=-\log(F)\,-\,1$,which is well-defined and continuous for \(\,F\in(0,1]\). However, we note that as \(F \to 0\), \(-\log(F)\) becomes large; in practice, one usually assumes \(\rho\) and \(\sigma\) are not perfectly orthogonal so that \(F\) does not vanish. In numerical implementations, small regularization or smoothing strategies can also handle the case \(F \approx 0\) as $\sigma$ in Eq. \ref{numericalsigma}.

By contrast, QRE can diverge to \(+\infty\), making it less convenient for gradient-based methods. The finite and smooth character of \(D_{\text{QIF}}\) ensures stable gradient calculations, enabling effective navigation of the parameter space without encountering discontinuities or undefined regions.

\section{Understanding $F\log F$ in Divergence}

A common way to measure the distance between two states is \(\mathcal{L}_{\text{simple}} = 1 - F\),  which ensures zero loss when \(\ket{\psi} = \ket{\phi}\) (\(F = 1\)). However, for \(F \ll 1\), the gradient \(\frac{\partial}{\partial F}(1 - F) = -1\) can be too weak to drive effective updates. To address this, one can employ an entropy-like mapping \(\mathcal{L}_{\text{ent}} = -F \log(F)\), which provides stronger gradients — especially in the mid to low range \(F\), while remaining finite as \(F \to 0^+\). In QIF, \(F \in [0,1]\) typically denotes fidelity, so the negative sign ensures a nonnegative, bounded measure that vanishes at \(F=1\). By contrast, when \(F\) refers to a classical divergence such as KL or JS, the situation differs: KL can exceed 1, making \(-F \log(F)\) negative or even unbounded below, which is problematic for interpreting it as a loss, whereas \(F \log(F)\) effectively amplifies gradients for large KL values. Likewise, although JS lies in \([0,\log 2]\subset(0,1)\), it need not vanish at \(F=1\), so \(F \log(F)\) can straightforwardly reshape gradients without requiring a nonnegative loss. Consequently, while QIF and KL/JS may involve a form of \(\pm F\log(F)\), they operate over distinct ranges and objectives. Then, we choose $F \log F$ for KL and JS.

For discrete distributions, \(D_{\mathrm{JS}}(P \| Q)\) as $F$ is bounded above by \(\log(2)\). More precisely,
\begin{equation}
    0 \;\le\; D_{\mathrm{JS}}(P \| Q) \;\le\; \log(2).
\end{equation}
Consider a function \( G(F) = F \log F \). We want to see how applying \(G\) to the divergences (KL or JS) might behave.

\begin{figure}[ht!]
\begin{center}
\centerline{\includegraphics[width=0.4\textwidth]{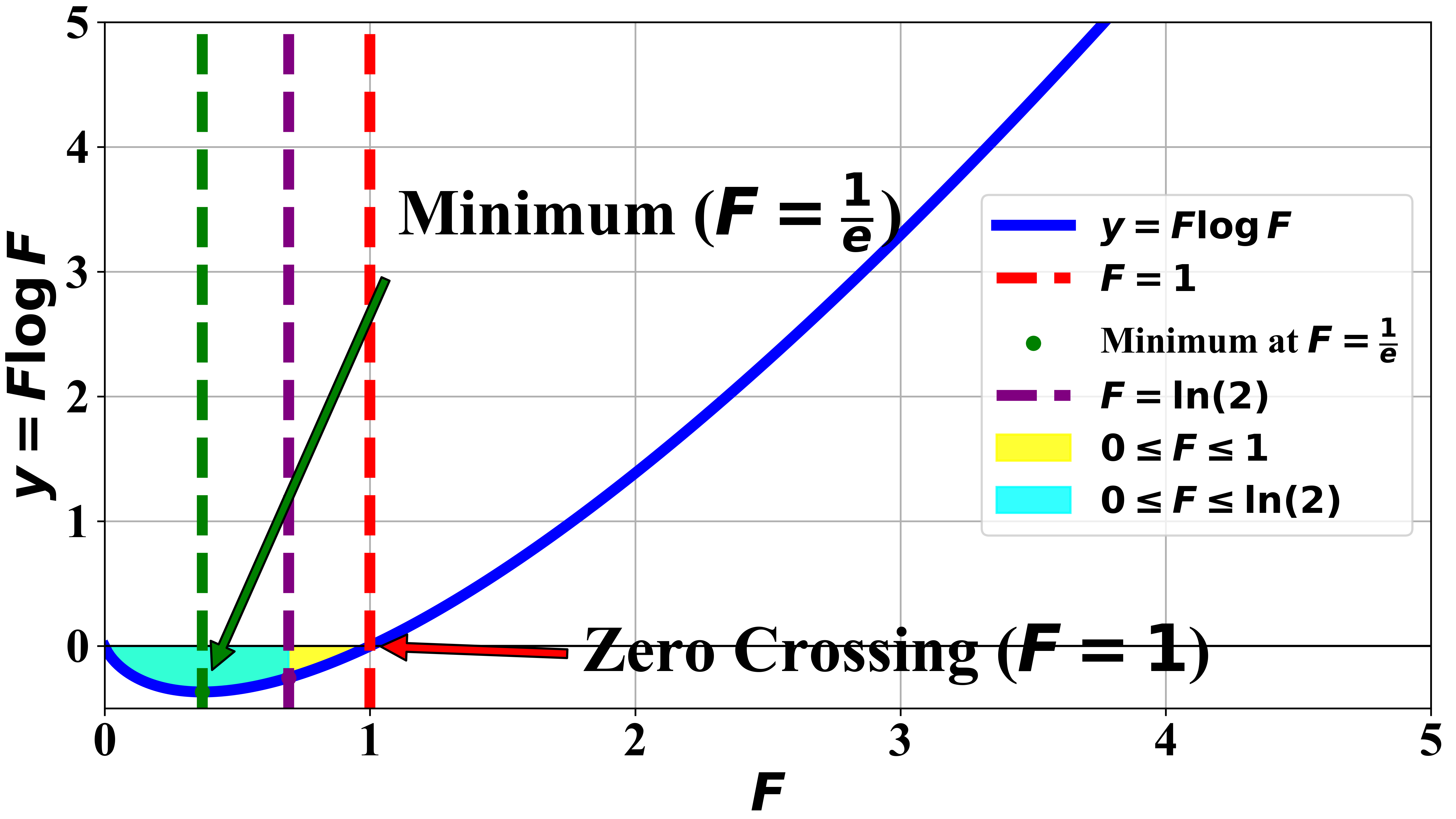}}
\caption{Visual illustration of $F \log F$.}
\label{flogfviso}
\end{center}
\vspace{-20pt}
\end{figure}
\textbf{Properties of \texorpdfstring{\(G(F)\)}{G(F)}} For any positive real number \(F\), the behavior of \(F\,\log F\) depends on whether \(F\) lies above or below unity. In particular, \(F\,\log F < 0\) whenever \(0 < F < 1\), and \(F\,\log F > 0\) for \(F > 1\). At \(F=1\), the function evaluates to zero. Moreover, as \(F \to 0^+\), the term \(F\,\log F\) approaches zero from the negative side, i.e., \(\lim_{F \to 0^+} F \,\log F = 0\). Conversely, \(F\,\log F\) grows unbounded for large \(F\to +\infty\). An illustrative plot of these trends is shown in Fig.~\ref{flogfviso}.

\textbf{Effect on KL Divergence} Because the KL divergence ranges from \(0\) to \(+\infty\), consider 
\begin{equation}
  G\bigl(D_{\mathrm{KL}}(P \| Q)\bigr) \;=\; D_{\mathrm{KL}}(P \| Q)\,\log\Bigl(D_{\mathrm{KL}}(P \| Q)\Bigr).
\end{equation} When \(D_{\mathrm{KL}}(P \| Q)\) is large (e.g., when \(P\) and \(Q\) are significantly different), \(G(\cdot)\) can amplify the difference since \(F \log F\) grows faster than \(F\) for sufficiently large \(F\). When \(D_{\mathrm{KL}}(P \| Q)\) is near zero, the value of \(F \log F\) remains near zero. Hence, applying \(F \log F\) to the KL divergence can provide a sharper penalty when the distributions differ significantly.

\textbf{Effect on JS Divergence} By contrast, the JS divergence satisfies
\[
0 \;\le\; D_{\mathrm{JS}}(P\|Q) \;\le\; \log(2)\approx 0.693,
\]
which is strictly below 1. Thus, for any $F \in (0,\log 2] \subset (0,1)$, we have $\log(F)<0$, implying $F\log F \le 0$. Define
\begin{equation}
  G\bigl(D_{\mathrm{JS}}(P\|Q)\bigr)
  \;=\; 
  D_{\mathrm{JS}}(P\|Q)\,\log\!\bigl(D_{\mathrm{JS}}(P\|Q)\bigr).
\end{equation}
In this bounded interval, $F\log F$ achieves its global minimum around $F=e^{-1}\approx0.3679$, where $F\log F \approx -0.3679$. 
At $F=\log(2)\approx0.693$, the value is about $-0.254$. Hence overall, $F\log F \;\in\; [-0.3679,\,0] \quad \text{for } F\in[0,\log 2].$
This means the magnitude of $F\log F$ remains relatively small compared to the unbounded KL case. Hence, applying $F\log F$ to $D_{\mathrm{JS}}$ does not drastically increase its value, although it can still slightly accentuate differences within the JS range.

\textbf{Range Analysis} For $D_{\mathrm{KL}}\in[0,+\infty),\, G(D_{\mathrm{KL}})$ can become arbitrarily large, while for $D_{\mathrm{JS}}\in[0,\log 2]\subset(0,1)$, $G(D_{\mathrm{JS}})$ stays within $[-0.3679,0]$. Thus, the transformation $F\mapsto F\log F$ significantly amplifies large values in KL but remains subdued for JS.

\begin{figure}[ht!]
\begin{center}
\centerline{\includegraphics[width=0.45\textwidth]{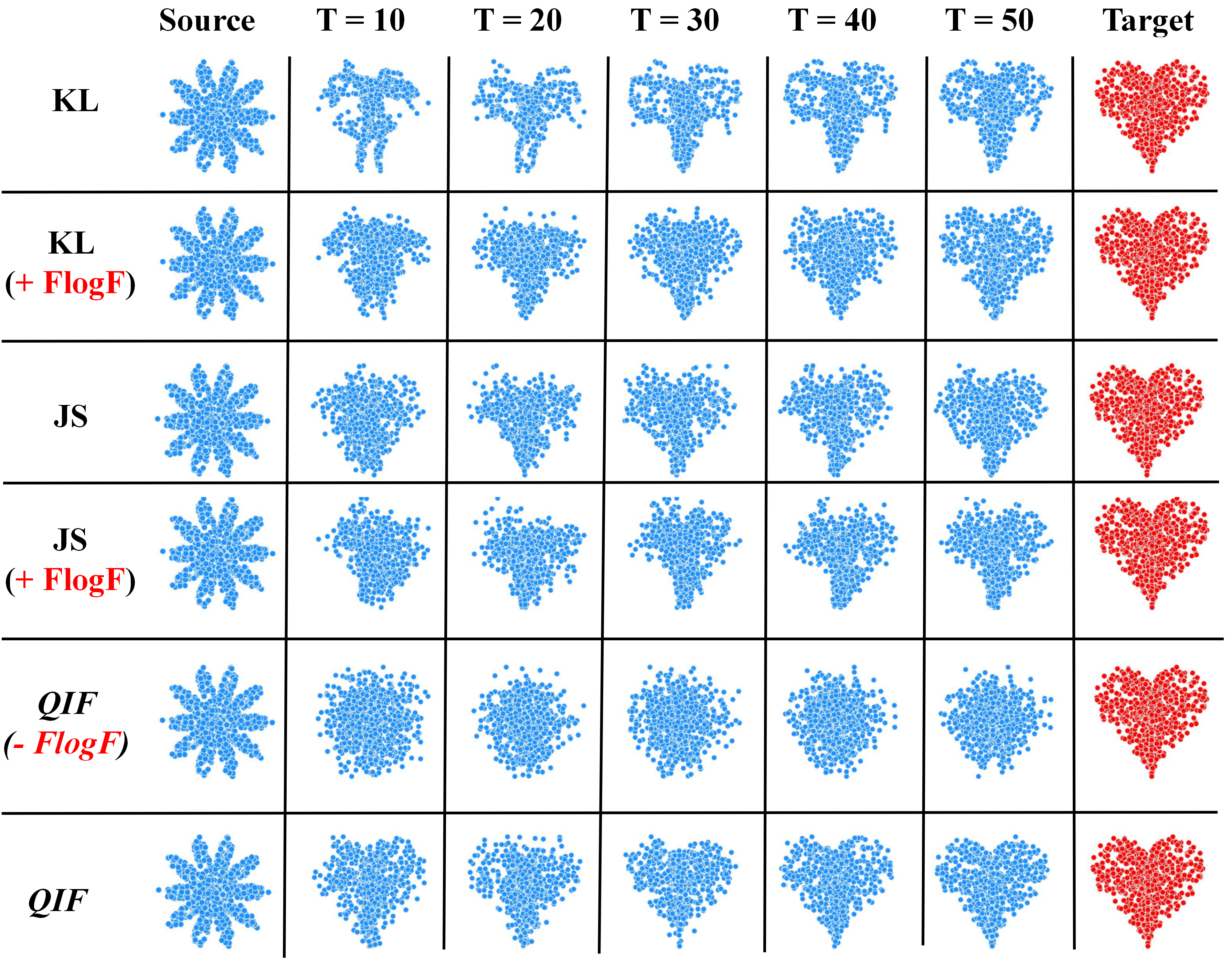}}
\caption{Distribution evolution of different divergence methods (QIF, KL, JS divergence with $F \log F$) and Sinkhorn Distance during optimization. The blue distribution in the beginning stage noted as $\textcolor{blue}{\bm{\ast}}$ is the initial distribution of the input of all algorithms, the red heart distribution noted as $\textcolor{red}{\bm{\heartsuit}}$ is the target distribution, and T is the number of iterations with $\sigma = 0.3$, learning rate $= 0.01$ and $1000$ sample points.}
\label{flogf}
\end{center}
\end{figure}

\textbf{Effect on QIF} We note when $F = 1$, $-F \log(F) = 0$. For intermediate values of $F$, the term $-F \log(F)$ provides a larger gradient, encouraging faster convergence. As $F \to 0^{+}$, we have the limit $\lim_{F \to 0^+} -F \log(F) = 0$, so it remains finite and avoids numerical divergence. We can see why it accelerates training by examining the derivative with respect to $F$. For $\mathcal{L}_{\text{ent}}(F) = -F \log(F)$, we compute $ \frac{d\mathcal{L}_{\text{ent}}}{dF} = -\log(F) - 1$. Compared to $\frac{d}{dF}(1 - F) = -1$, the gradient from $-F \log(F)$ has a \emph{logarithmic} component $-\log(F)$ which grows in magnitude when $F$ is small (yet not pushing the loss to infinity). This $\log$ term drives stronger updates for mid-range or small $F$ values. The function $-x \log(x)$ arises frequently in information theory, for example, in the definition of the Shannon entropy of a probability $p$ \cite{lin1991divergence}: $H(p) = - p \log (p)$. By analogy, we can think of $-F \log(F)$ as placing an \emph{information-theoretic penalty} when the fidelity $F$ differs from 1. Therefore, it fits well with the notion of ``distance'' or ``divergence'' in classical and quantum information contexts.

Including the term $-F \log(F)$ instead of a more straightforward function like $(1 - F)$ often yields better gradient properties, aiding faster and more stable optimization. An entropy-inspired loss that more robustly penalizes states that do not closely match. Hence, empirical results often show that using $-F \log(F)$ improves performance over a simple fidelity-based loss.

\section{The Role of $F \log F$ in Gradients: $\log F + 1$}
To understand the impact of the \( F \log F \) transformation on gradient descent, we derive the gradients of the transformed divergences with respect to model parameters \( \theta \). We define a transformation \( G(F) = F \log F \), where \( F \) represents a divergence measure (either KL or JS divergence, QIF). Applying this transformation aims to modify the behavior of the divergence, potentially enhancing optimization characteristics such as convergence speed and stability.

Assume \( P \) and \( Q \) are parameterized distributions \( P_\theta \) and \( Q_\theta \), where \( \theta \) represents the model parameters. Applying the \( F \log F \) transformation, the gradient formula of JS becomes: 
\begin{equation}
\nabla_\theta G(D_{\mathrm{JS}}(P_\theta \| Q_\theta)) = \left( \log(D_{\mathrm{JS}}) + 1 \right) \cdot \nabla_\theta D_{\mathrm{JS}}. 
\end{equation} Similarly, for KL divergence: \begin{equation}
\nabla_\theta G(D_{\mathrm{KL}}(P_\theta \| Q_\theta)) = \left( \log(D_{\mathrm{KL}}) + 1 \right) \cdot \nabla_\theta D_{\mathrm{KL}}.
\end{equation}

KL divergence has an unbounded range \( [0, +\infty) \). When applying the \( F \log F \) transformation, the scaling factor \( \log(F) + 1 \) introduces significant changes based on the value of \( F \). For large \( D_{\mathrm{KL}} \),
$\log(D_{\mathrm{KL}}) + 1$ increases significantly, which results in an amplified gradient, leading to larger parameter updates and faster convergence when the model is far from the target distribution. For small \( D_{\mathrm{KL}} \), $\log(D_{\mathrm{KL}}) + 1 $ decreases or becomes negative, which reduces the gradient magnitude or even reverses its direction, enhancing stability and preventing overshooting as the model approaches the optimal parameters. Thus, the \( F \log F \) transformation dynamically adjusts the gradient based on the current value of KL divergence, facilitating both rapid convergence and stable refinement.

In contrast, the the JS divergence is bounded within the range \( [0, \log 2] \). Given that \( D_{\mathrm{JS}} \in [0, \log 2] \), the scaling factor \( \log(D_{\mathrm{JS}}) + 1 \) behaves as:  for maximum at \( D_{\mathrm{JS}} = \log 2 \): $\log(\log 2) + 1 \approx 0.634$. 

As \( D_{\mathrm{JS}} \) approaches 0: $\log(D_{\mathrm{JS}}) + 1$ $\text{approaches } -\infty$. 
However, in practice, a small epsilon is added to prevent numerical instability $\log(D_{\mathrm{JS}} + \epsilon) + 1$. Thus, the scaling factor remains within a limited range.
\begin{figure}[ht!]
\begin{center}
\centerline{\includegraphics[width=0.4\textwidth]{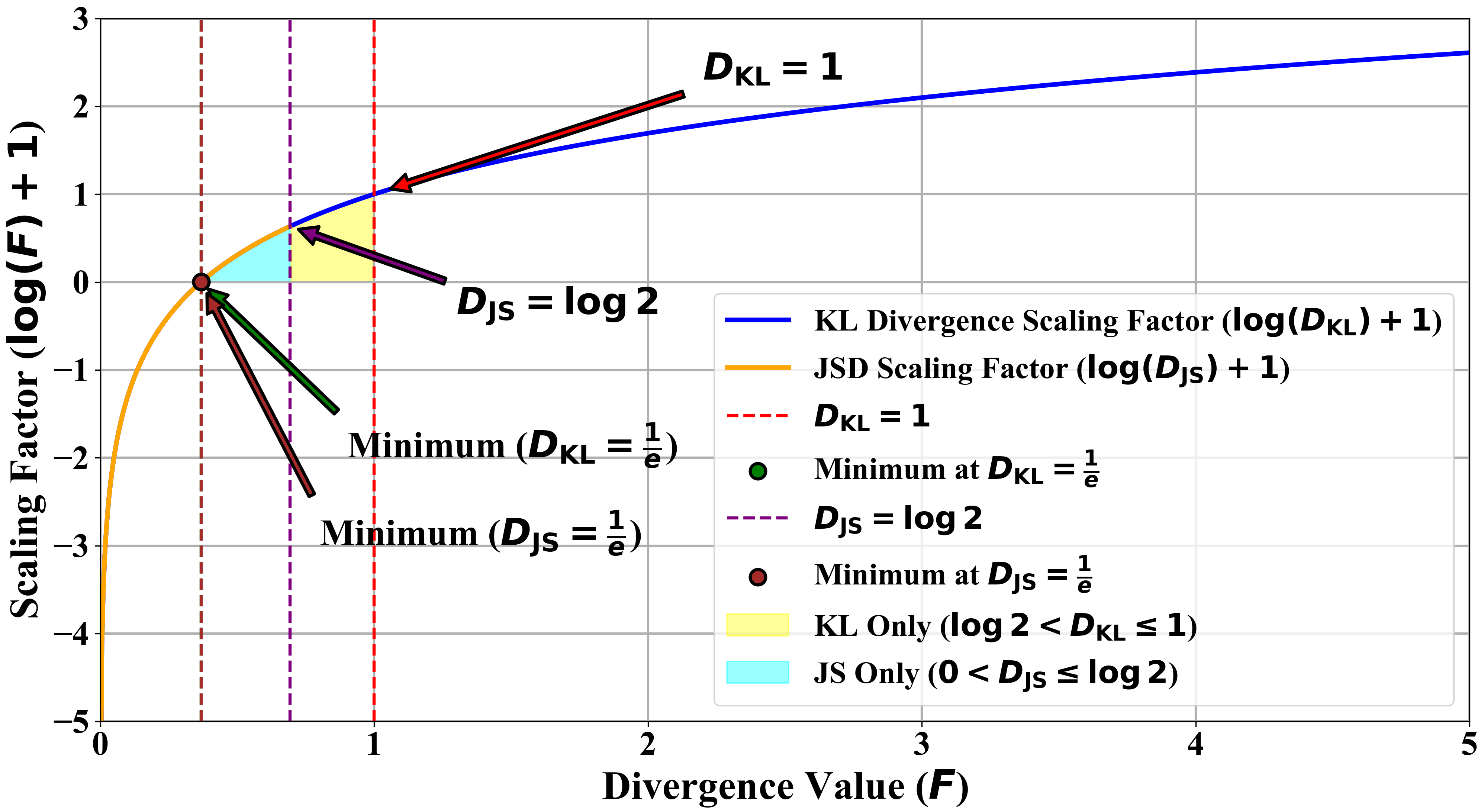}}
\caption{Visual illustration of $ \log (F) + 1 $.}
\label{flogfvisograndient}
\end{center}
\vspace{-25pt}
\end{figure}

Due to the limited range of \( D_{\mathrm{JS}} \), the scaling factor \( \log(D_{\mathrm{JS}}) + 1 \) does not vary as dramatically as it does for KL divergence. Consequently, the \( F \log F \) transformation induces only minor adjustments to the gradient of JS divergence, resulting in a negligible impact on the optimization process, as shown in Fig. \ref{flogfvisograndient}. 

So, for KL Divergence, the unbounded nature allows for significant dynamic adjustment of gradients, facilitating both rapid convergence and stable optimization. However, for JS divergence, the bounded range restricts the extent of gradient scaling, resulting in minimal changes to the optimization dynamics. As shown in Fig. \ref{flogf}, the $F \log F$ brings better convergence to KL, but for JS, this benefit is weak. At the same time, we conducted a comparative experiment for the proposed QIF. However, QIF without $F \log F$ shows poor convergence properties. Therefore, for different algorithms, due to the characteristics of their boundary, $F \log F$ will also show different benefits. 

\section{Application to Deep Neural Network}
To further verify the effectiveness of the proposed QIF in the deep learning framework, we tested the performance of QR-Drop based on QIF on different public datasets.

\subsection{Classical R-Drop Regularization}
Given the training dataset $D = \{(x_i, y_i)\}_{i=1}^{n}$, the goal of training is to learn a model $P_w(y|x)$, where $n$ is the number of training samples, and $(x_i, y_i)$ is a labeled data pair. The main learning objective is to minimize the negative log-likelihood (NLL) loss function
$L_{\text{nll}}$. R-Drop forces two distributions for the same data sample outputted by the two sub-models to be consistent with each other by minimizing the bidirectional KL divergence between the two distributions:
\begin{align}
L_{\text{KL}} = \frac{1}{2} \Big( & D_{\text{KL}}(P_{w_1}(y_i|x_i) \parallel P_{w_2}(y_i|x_i)) \notag \\
& + D_{\text{KL}}(P_{w_2}(y_i|x_i) \parallel P_{w_1}(y_i|x_i)) \Big),
\end{align}
where $P_{w_1}(y_i|x_i)$ and $P_{w_2}(y_i|x_i)$ are the output distributions of two different sub-models produced by dropout for the same input $x_i$. The final training objective is to minimize the combination of the NLL loss and the KL-divergence loss for each data sample $(x_i, y_i)$:
\begin{equation}
\label{eq:lossrdrop}
L_i  = L_{\text{NLL}} + \beta \cdot L_{\text{KL}},
\end{equation} where $\beta$ is a coefficient weight to control the contribution of the KL-divergence loss.

\subsection{QIF for R-drop (QR-Drop) Regularization}
\begin{figure}[ht]
\begin{center}
\centerline{\includegraphics[width=0.4\textwidth]{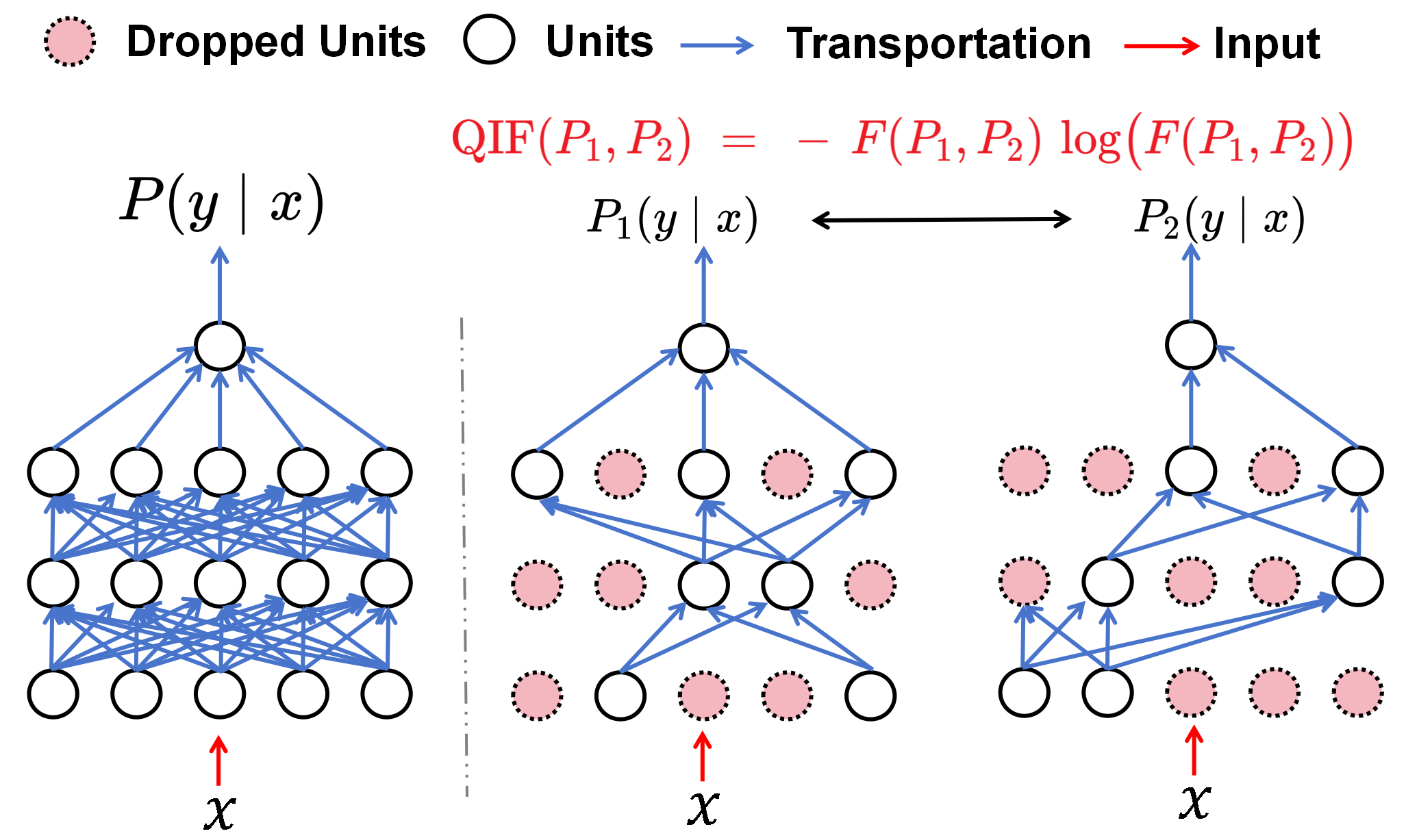}}
\caption{The overall framework of QR-Drop. \textbf{Left}: Neural network model. \textbf{Right}: QIF based QR-Drop. The difference in fidelity $\text{QIF}(P_1,P_2)$ is calculated for $L_{QIF}$.}
\label{Mainflow}
\end{center}
\vspace{-20pt}
\end{figure}
We propose a novel dropout method based on QIF. By integrating quantum-inspired fidelity into the R-Drop framework, we can impose a stricter consistency constraint. Following Eq. \ref{eq:lossrdrop}, the total loss becomes:
\begin{equation}
L_{i}  = L_{\text{NLL}} + \beta \cdot L_{\text{QIF}}, 
\end{equation} where \( \beta\geq 0 \) is a hyperparameter controlling the weight of the fidelity term, as shown in Fig. \ref{Mainflow}.  

\begin{figure}[h]
    \centering
    \begin{subfigure}[b]{0.23\textwidth}
        \centering
        \includegraphics[width=\textwidth]{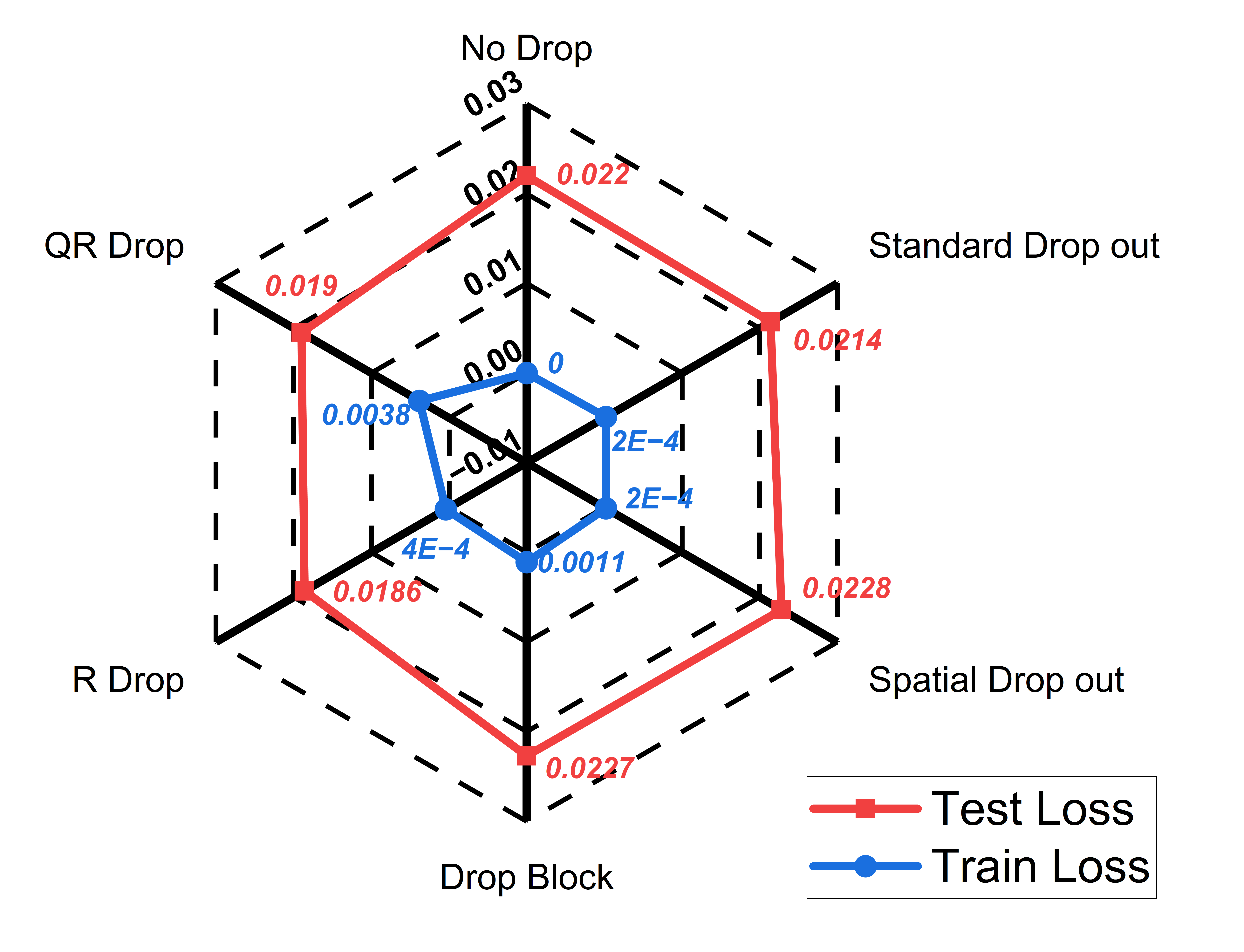}
        \caption{Loss radar curve}
    \end{subfigure}
    \begin{subfigure}[b]{0.23\textwidth}
        \centering
        \includegraphics[width=\textwidth]{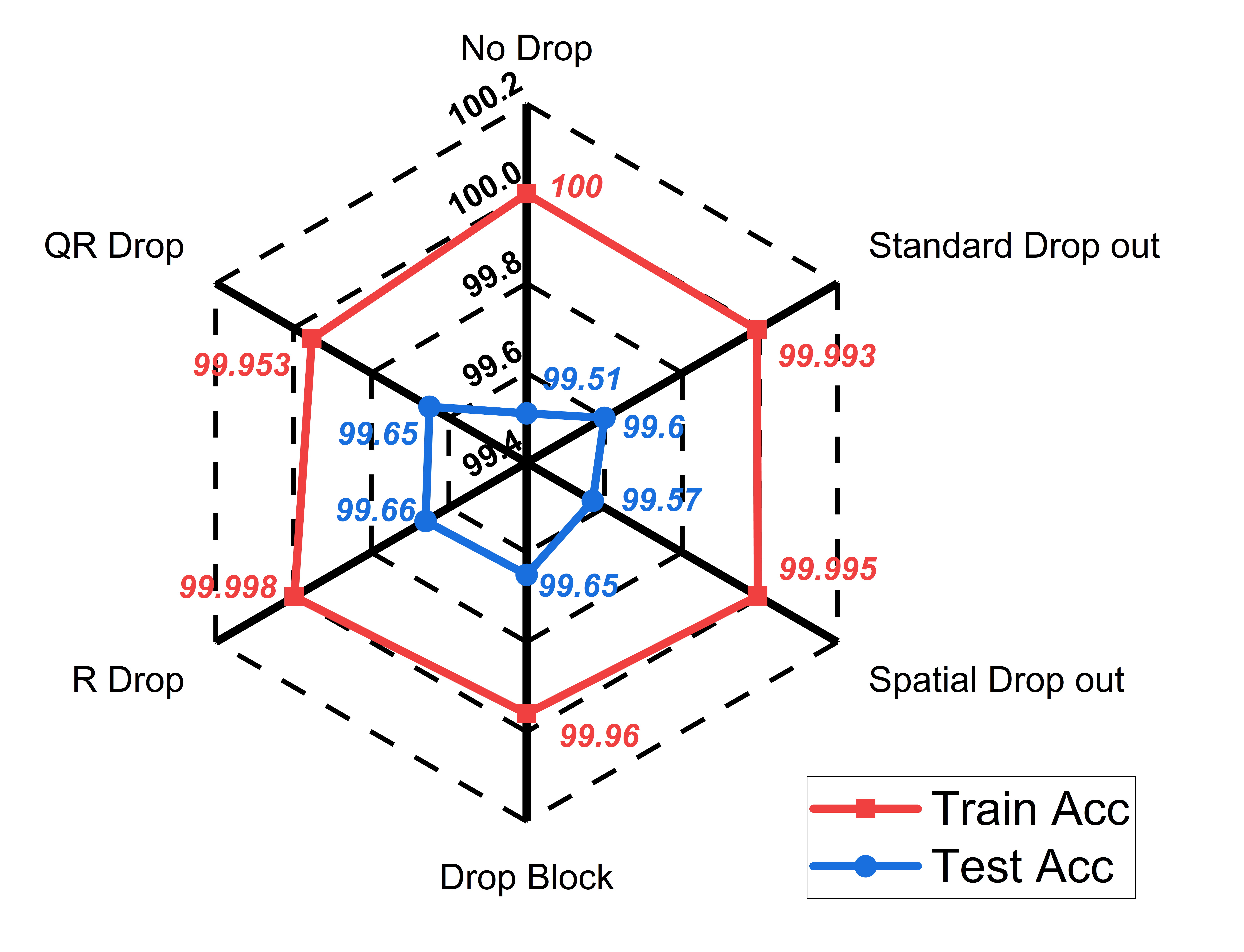}
        \caption{Accuracy radar curve}
    \end{subfigure}
    \caption{Comparison with dropout methods on MNIST.}
    \label{fig:qr-dropmnist}
    \vspace{-15pt}
\end{figure}

\subsection{Image Classification Task}
\textbf{Settings} We adopt the stochastic gradient descent (SGD) \cite{robbins1951stochastic} optimizer. The initial learning rate $\eta$ is set to $1e^{-1}$ and decreased by $10$ every $60$ epoch for the CIFAR-10 \cite{krizhevsky2009learning} data set and MNIST \cite{mnist} with the backbone ResNet-18 \cite{he2016deepresnet}. The dropout rate is fixed at $0.1$.
\begin{figure}[h]
    \centering
    \begin{subfigure}[b]{0.23\textwidth}
        \centering
        \includegraphics[width=\textwidth]{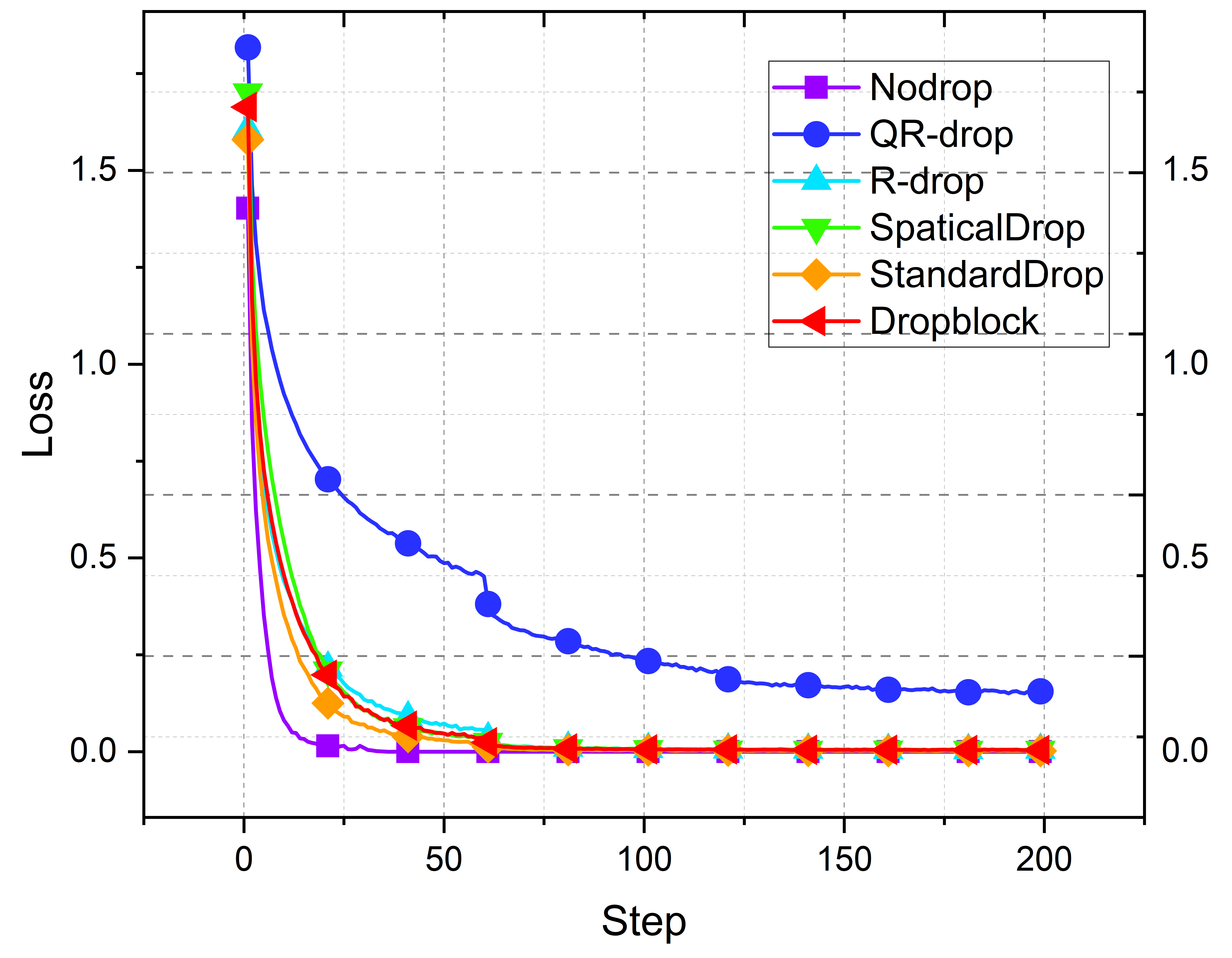}
        \caption{Train loss curve}
    \end{subfigure}
    \begin{subfigure}[b]{0.24\textwidth}
        \centering
        \includegraphics[width=\textwidth]{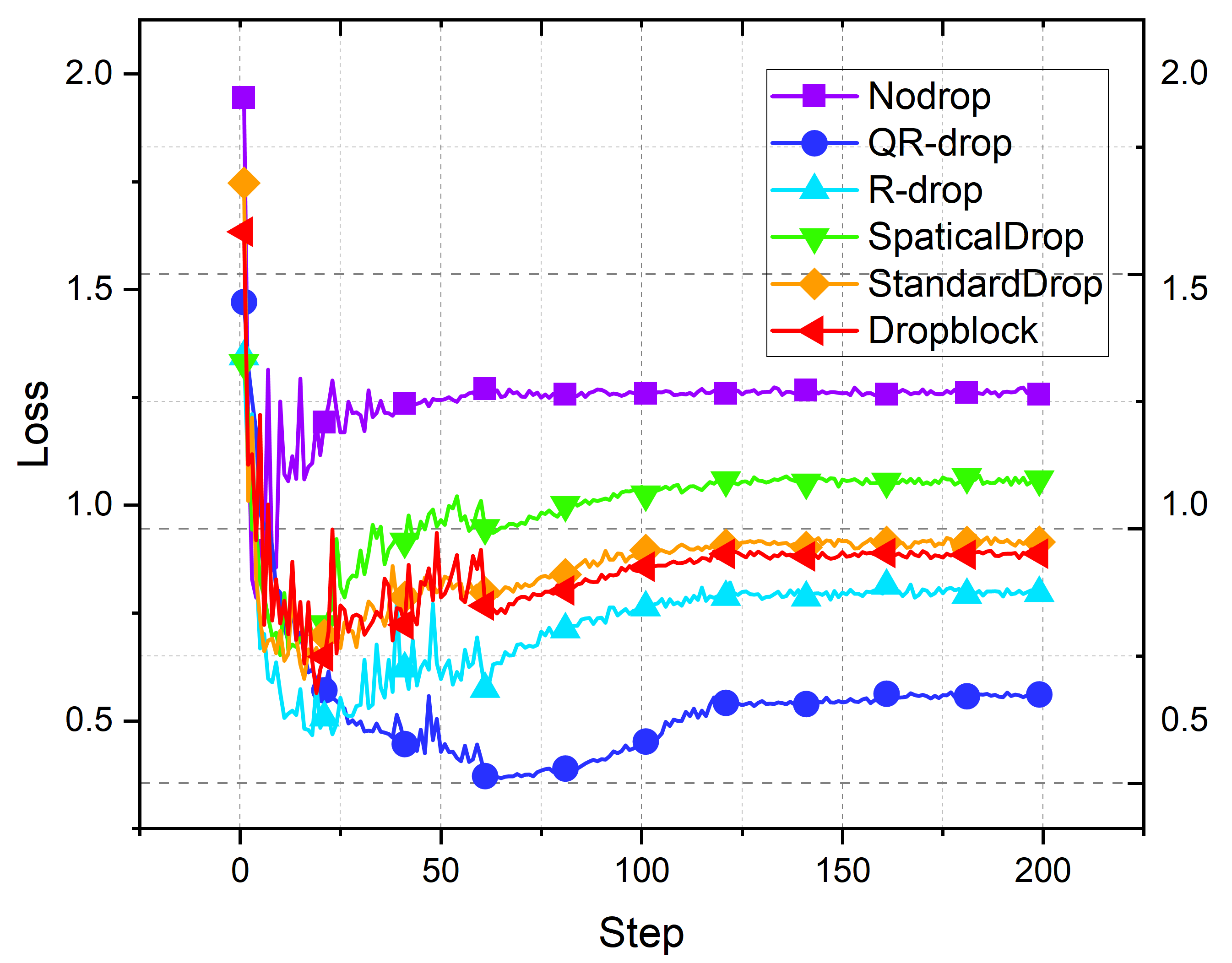}
        \caption{Test loss curve}
    \end{subfigure}
        \caption{Comparison with dropout methods on CIFAR-10.}
    \label{fig:qr-dropcifar10loss}
\end{figure}

The results in Fig. \ref{fig:qr-dropmnist}-\ref{fig:qr-cifar10dropacc}  show that during training, the unregularized ResNet-18 is more susceptible to overfitting, as evidenced by its convergence to the highest test loss despite achieving the lowest training loss. In contrast, R-Drop yields better convergence than the standard drop method, but QR-Drop demonstrates the best overall performance. 

\begin{figure}[h]
    \centering
    \begin{subfigure}[b]{0.23\textwidth}
        \centering
        \includegraphics[width=\textwidth]{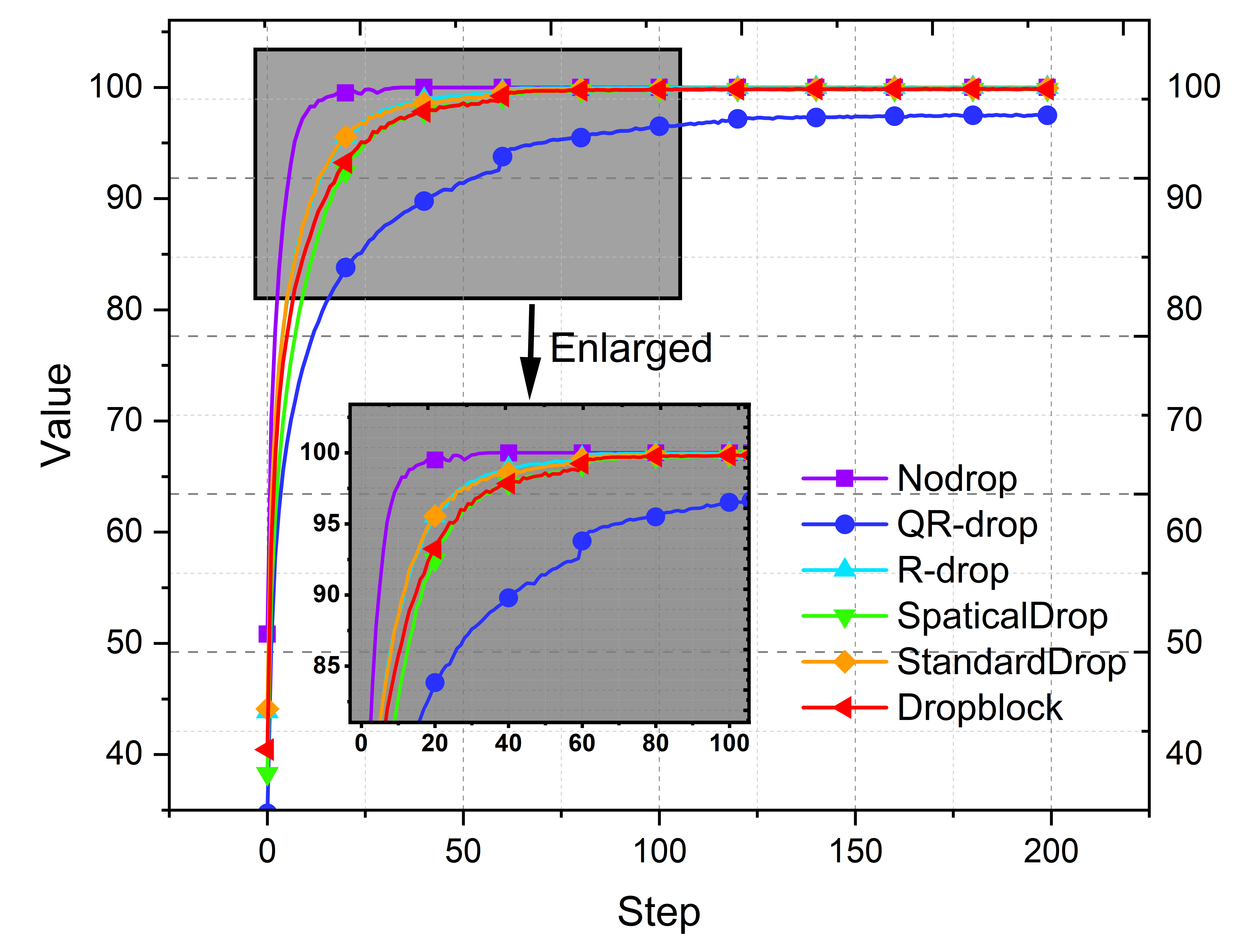}
        \caption{Train accuracy curve}
    \end{subfigure}
    \begin{subfigure}[b]{0.23\textwidth}
        \centering
        \includegraphics[width=\textwidth]{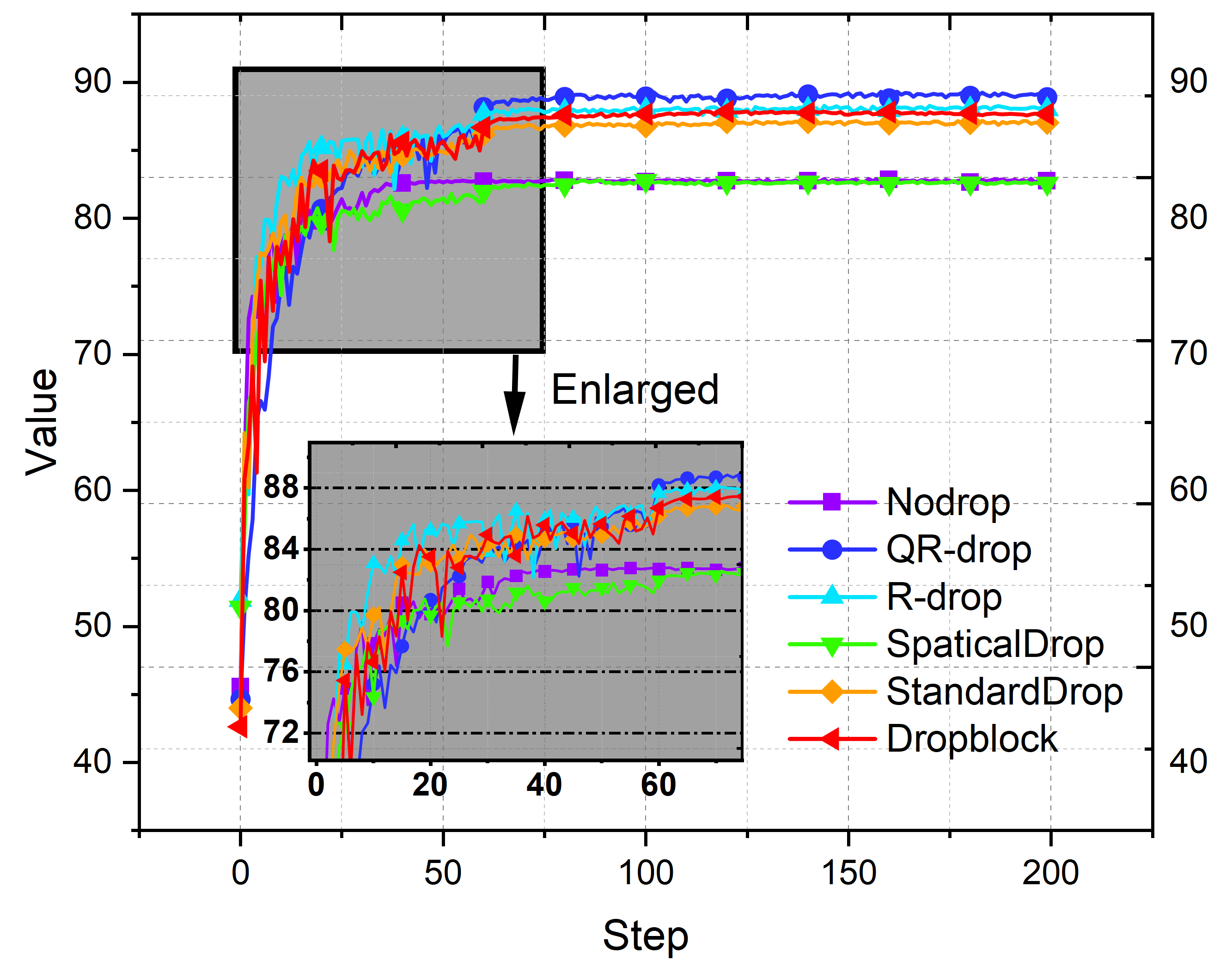}
        \caption{Test accuracy curve}
    \end{subfigure}
    \caption{Comparison with dropout methods on CIFAR-10.}
    \label{fig:qr-cifar10dropacc}
\end{figure}

\subsection{Language Understanding Task}
We take the BERT-base \cite{devlin2018bert}, ELECTRA-large \cite{clark2020electra}, and RoBERTa-large \cite{liu2019roberta} pre-trained models as our backbones to perform fine-tuning. The dataset is the commonly adopted GLUE benchmark \cite{wang2018glue}. The result for STS-B is the Pearson correlation; Matthew’s correlation is used for CoLA; Other tasks are measured by Accuracy. QR-Drop yields consistent gains or matches performance across a variety of GLUE tasks compared to both the baseline and the R-Drop method, as observed by higher average scores for BERT-base (78.5 vs.\ 78.2) and ELECTRA-large (87.0 vs.\ 86.9). In particular, it improves upon challenging tasks like RTE (BERT-base) and CoLA (ELECTRA-large), highlighting its effectiveness in enhancing model generalization in Table. \ref{table:glue-results}.

\begin{table}[ht]
    \centering
    \setlength{\tabcolsep}{2pt}
    \fontsize{8pt}{4pt}\selectfont
    \begin{tabular}{l|c|c|c|c|c|c}
        \toprule
        \textbf{Model} & \textbf{MNLI} & \textbf{RTE} & \textbf{QQP} & \textbf{SST-2} & \textbf{CoLA} & \textbf{Avg} \\
        \midrule
        BERT-base & $83.6$ & $61.4 $ & $90.1 $ & $92.3$ & $59.1 $ & $77.2 $ \\
        \addlinespace[2pt]
        BERT-base + RD & $84.4$ & $61.4 $ & $91.1 $ & $92.8$ & $60.4$ & $78.2$ \\
        \addlinespace[2pt]
        \hline
        \addlinespace[2pt]
        \textbf{BERT-base + QRD} & $84.6$ & $63.2 $ & $91.1 $ & $92.8$ & $60.7 $ & $\textbf{78.5} $\\
        \midrule
        ELECTRA-large & $90.6$ & $85.0 $ & $92.2 $ & $96.4$ & $69.0 $ & $86.6$ \\
        \addlinespace[2pt]
        ELECTRA-large + RD & $90.6$ & $85.6 $ & $92.7 $ & $96.8$ & $69.1$ & $86.9$ \\
        \addlinespace[2pt]
        \hline
        \addlinespace[2pt]
        \textbf{ELECTRA-large + QRD} & $90.6$ & $85.4 $ & $92.5 $ & $96.7$ & $69.4 $ & $\textbf{87.0} $ \\
        \midrule
        RoBERTa-large & $90.1$ & $85.2 $ & $92.2 $ & $95.6$ & $66.4 $ & $85.9 $ \\
        \addlinespace[2pt]
        RoBERTa-large + RD & $90.6$ & $86.7 $ & $92.3 $ & $96.1$ & $67.1$ & $86.6^\ast $ \\
        \addlinespace[2pt]
        \hline
        \addlinespace[2pt]
         \textbf{RoBERTa-large + QRD} & $90.7$ & $87.1 $ & $92.5 $ & $96.4$ & $66.9 $ & $86.6^\ddagger $\\
        \bottomrule
    \end{tabular}
    \caption{Performance of fine-tuned models on GLUE. RD is R-Drop, and QRD is QR-Drop with $\beta = 1.0$. $\ast$:86.56 and $\ddagger$:86.62.}
    \label{table:glue-results}
  
\end{table}

\textbf{Limitations}  Current experiments are limited to classification tasks; QIF's effectiveness in reinforcement learning, generative models, and self-supervised learning remains unexplored. It also lacks evaluations on large-scale NLP and vision models (e.g., GPT, ViT). In future work, we plan to expand QIF applications, assess its impact on broader tasks, and explore its potential benefits in deep learning.

\section{Conclusion}
In this work, we present the quantum-inspired fidelity-based divergence (QIF), a novel algorithm that overcomes the key limitations of the conventional Kullback–Leibler divergence—particularly its instability when distributions differ significantly or contain zero probabilities—while maintaining the same \(\mathcal{O}(n)\) computational complexity. Compared to the standard quantum relative entropy, QIF is a more adaptive and stable alternative for machine learning tasks. Furthermore, we introduce QR-Drop, a new dropout algorithm derived from QIF, demonstrating superior robustness over the traditional R-Drop. Notably, the proposed quantum fidelity-inspired approach does not require validation on actual quantum hardware, thereby circumventing the constraints posed by current quantum computing capabilities. This hardware-independence reinforces QIF’s practical and theoretical advantages across a wide range of applications.
\clearpage

\bibliography{Reference}

\begin{thebibliography}{57}
\providecommand{\natexlab}[1]{#1}
\providecommand{\url}[1]{\texttt{#1}}
\expandafter\ifx\csname urlstyle\endcsname\relax
  \providecommand{\doi}[1]{doi: #1}\else
  \providecommand{\doi}{doi: \begingroup \urlstyle{rm}\Url}\fi

\bibitem[Arbel et~al.(2019)Arbel, Korba, Salim, and Gretton]{arbel2019maximummmd}
Arbel, M., Korba, A., Salim, A., and Gretton, A.
\newblock Maximum mean discrepancy gradient flow.
\newblock \emph{Advances in Neural Information Processing Systems}, 32, 2019.

\bibitem[Ba \& Frey(2013)Ba and Frey]{ba2013adaptive}
Ba, J. and Frey, B.
\newblock Adaptive dropout for training deep neural networks.
\newblock \emph{Advances in neural information processing systems}, 26, 2013.

\bibitem[Bach(2022)]{kkl}
Bach, F.
\newblock Information theory with kernel methods.
\newblock \emph{IEEE Transactions on Information Theory}, 69\penalty0 (2):\penalty0 752--775, 2022.

\bibitem[Baldi \& Sadowski(2013)Baldi and Sadowski]{baldi2013understanding}
Baldi, P. and Sadowski, P.~J.
\newblock Understanding dropout.
\newblock \emph{Advances in neural information processing systems}, 26, 2013.

\bibitem[Bengtsson \& {\.Z}yczkowski(2017)Bengtsson and {\.Z}yczkowski]{bengtsson2017geometry}
Bengtsson, I. and {\.Z}yczkowski, K.
\newblock \emph{Geometry of quantum states: an introduction to quantum entanglement}.
\newblock Cambridge university press, 2017.

\bibitem[Bergholm et~al.(2022)Bergholm, Izaac, Schuld, Gogolin, Ahmed, Ajith, Alam, Alonso-Linaje, AkashNarayanan, Asadi, Arrazola, Azad, Banning, Blank, Bromley, Cordier, Ceroni, Delgado, Matteo, Dusko, Garg, Guala, Hayes, Hill, Ijaz, Isacsson, Ittah, Jahangiri, Jain, Jiang, Khandelwal, Kottmann, Lang, Lee, Loke, Lowe, McKiernan, Meyer, Montañez-Barrera, Moyard, Niu, O'Riordan, Oud, Panigrahi, Park, Polatajko, Quesada, Roberts, Sá, Schoch, Shi, Shu, Sim, Singh, Strandberg, Soni, Száva, Thabet, Vargas-Hernández, Vincent, Vitucci, Weber, Wierichs, Wiersema, Willmann, Wong, Zhang, and Killoran]{pennylane}
Bergholm, V., Izaac, J., Schuld, M., Gogolin, C., Ahmed, S., Ajith, V., Alam, M.~S., Alonso-Linaje, G., AkashNarayanan, B., Asadi, A., Arrazola, J.~M., Azad, U., Banning, S., Blank, C., Bromley, T.~R., Cordier, B.~A., Ceroni, J., Delgado, A., Matteo, O.~D., Dusko, A., Garg, T., Guala, D., Hayes, A., Hill, R., Ijaz, A., Isacsson, T., Ittah, D., Jahangiri, S., Jain, P., Jiang, E., Khandelwal, A., Kottmann, K., Lang, R.~A., Lee, C., Loke, T., Lowe, A., McKiernan, K., Meyer, J.~J., Montañez-Barrera, J.~A., Moyard, R., Niu, Z., O'Riordan, L.~J., Oud, S., Panigrahi, A., Park, C.-Y., Polatajko, D., Quesada, N., Roberts, C., Sá, N., Schoch, I., Shi, B., Shu, S., Sim, S., Singh, A., Strandberg, I., Soni, J., Száva, A., Thabet, S., Vargas-Hernández, R.~A., Vincent, T., Vitucci, N., Weber, M., Wierichs, D., Wiersema, R., Willmann, M., Wong, V., Zhang, S., and Killoran, N.
\newblock Pennylane: Automatic differentiation of hybrid quantum-classical computations, 2022.
\newblock URL \url{https://arxiv.org/abs/1811.04968}.

\bibitem[Bhattacharyya(1943)]{bhattacharyya1943measure}
Bhattacharyya, A.
\newblock On a measure of divergence between two statistical populations defined by their probability distribution.
\newblock \emph{Bulletin of the Calcutta Mathematical Society}, 35:\penalty0 99--110, 1943.

\bibitem[Blei et~al.(2017)Blei, Kucukelbir, and McAuliffe]{blei2017variational}
Blei, D.~M., Kucukelbir, A., and McAuliffe, J.~D.
\newblock Variational inference: A review for statisticians.
\newblock \emph{Journal of the American statistical Association}, 112\penalty0 (518):\penalty0 859--877, 2017.

\bibitem[Bonnici(2020)]{bonnici2020kullback}
Bonnici, V.
\newblock Kullback-leibler divergence between quantum distributions, and its upper-bound.
\newblock \emph{arXiv preprint arXiv:2008.05932}, 2020.

\bibitem[Chazal et~al.(2024)Chazal, Korba, and Bach]{regularizedkkl}
Chazal, C., Korba, A., and Bach, F.
\newblock Statistical and geometrical properties of regularized kernel kullback-leibler divergence.
\newblock \emph{arXiv preprint arXiv:2408.16543}, 2024.

\bibitem[Choe \& Shim(2019)Choe and Shim]{choe2019attention}
Choe, J. and Shim, H.
\newblock Attention-based dropout layer for weakly supervised object localization.
\newblock In \emph{Proceedings of the IEEE/CVF conference on computer vision and pattern recognition}, pp.\  2219--2228, 2019.

\bibitem[Ciliberto et~al.(2018)Ciliberto, Herbster, Ialongo, Pontil, Rocchetto, Severini, and Wossnig]{ciliberto2018quantum}
Ciliberto, C., Herbster, M., Ialongo, A.~D., Pontil, M., Rocchetto, A., Severini, S., and Wossnig, L.
\newblock Quantum machine learning: a classical perspective.
\newblock \emph{Proceedings of the Royal Society A: Mathematical, Physical and Engineering Sciences}, 474\penalty0 (2209):\penalty0 20170551, 2018.

\bibitem[Clark(2020)]{clark2020electra}
Clark, K.
\newblock Electra: Pre-training text encoders as discriminators rather than generators.
\newblock \emph{arXiv preprint arXiv:2003.10555}, 2020.

\bibitem[Dauphin et~al.(2014)Dauphin, Pascanu, Gulcehre, Cho, Ganguli, and Bengio]{dauphin2014identifying}
Dauphin, Y.~N., Pascanu, R., Gulcehre, C., Cho, K., Ganguli, S., and Bengio, Y.
\newblock Identifying and attacking the saddle point problem in high-dimensional non-convex optimization.
\newblock \emph{Advances in neural information processing systems}, 27, 2014.

\bibitem[Deng(2012)]{mnist}
Deng, L.
\newblock The mnist database of handwritten digit images for machine learning research [best of the web].
\newblock \emph{IEEE Signal Processing Magazine}, 29\penalty0 (6):\penalty0 141--142, 2012.
\newblock \doi{10.1109/MSP.2012.2211477}.

\bibitem[Devlin(2018)]{devlin2018bert}
Devlin, J.
\newblock Bert: Pre-training of deep bidirectional transformers for language understanding.
\newblock \emph{arXiv preprint arXiv:1810.04805}, 2018.

\bibitem[Ding et~al.(2020)Ding, Wu, Holmes, Wiseth, Franklin, Martonosi, and Chong]{ding2020square}
Ding, Y., Wu, X.-C., Holmes, A., Wiseth, A., Franklin, D., Martonosi, M., and Chong, F.~T.
\newblock Square: Strategic quantum ancilla reuse for modular quantum programs via cost-effective uncomputation.
\newblock In \emph{2020 ACM/IEEE 47th Annual International Symposium on Computer Architecture (ISCA)}, pp.\  570--583. IEEE, 2020.

\bibitem[Gal \& Ghahramani(2016)Gal and Ghahramani]{gal2016theoretically}
Gal, Y. and Ghahramani, Z.
\newblock A theoretically grounded application of dropout in recurrent neural networks.
\newblock \emph{Advances in neural information processing systems}, 29, 2016.

\bibitem[Gal et~al.(2017)Gal, Hron, and Kendall]{gal2017concrete}
Gal, Y., Hron, J., and Kendall, A.
\newblock Concrete dropout.
\newblock \emph{Advances in neural information processing systems}, 30, 2017.

\bibitem[Gelman et~al.(1995)Gelman, Carlin, Stern, and Rubin]{gelman1995bayesian}
Gelman, A., Carlin, J.~B., Stern, H.~S., and Rubin, D.~B.
\newblock \emph{Bayesian data analysis}.
\newblock Chapman and Hall/CRC, 1995.

\bibitem[Ghiasi et~al.(2018)Ghiasi, Lin, and Le]{ghiasi2018dropblock}
Ghiasi, G., Lin, T.-Y., and Le, Q.~V.
\newblock Dropblock: A regularization method for convolutional networks.
\newblock \emph{Advances in neural information processing systems}, 31, 2018.

\bibitem[Glaser et~al.(2021)Glaser, Arbel, and Gretton]{glaser2021kale}
Glaser, P., Arbel, M., and Gretton, A.
\newblock Kale flow: A relaxed kl gradient flow for probabilities with disjoint support.
\newblock \emph{Advances in Neural Information Processing Systems}, 34:\penalty0 8018--8031, 2021.

\bibitem[Gretton et~al.(2012)Gretton, Borgwardt, Rasch, Sch{\"o}lkopf, and Smola]{gretton2012kernel}
Gretton, A., Borgwardt, K.~M., Rasch, M.~J., Sch{\"o}lkopf, B., and Smola, A.
\newblock A kernel two-sample test.
\newblock \emph{The Journal of Machine Learning Research}, 13\penalty0 (1):\penalty0 723--773, 2012.

\bibitem[He et~al.(2016)He, Zhang, Ren, and Sun]{he2016deepresnet}
He, K., Zhang, X., Ren, S., and Sun, J.
\newblock Deep residual learning for image recognition.
\newblock In \emph{Proceedings of the IEEE conference on computer vision and pattern recognition}, pp.\  770--778, 2016.

\bibitem[Horodecki et~al.(2009)Horodecki, Horodecki, Horodecki, and Horodecki]{horodecki2009quantum}
Horodecki, R., Horodecki, P., Horodecki, M., and Horodecki, K.
\newblock Quantum entanglement.
\newblock \emph{Reviews of modern physics}, 81\penalty0 (2):\penalty0 865--942, 2009.

\bibitem[Kingma(2013)]{kingma2013auto}
Kingma, D.~P.
\newblock Auto-encoding variational bayes.
\newblock \emph{arXiv preprint arXiv:1312.6114}, 2013.

\bibitem[Korba et~al.(2021)Korba, Aubin-Frankowski, Majewski, and Ablin]{korba2021kernel}
Korba, A., Aubin-Frankowski, P.-C., Majewski, S., and Ablin, P.
\newblock Kernel stein discrepancy descent.
\newblock In \emph{International Conference on Machine Learning}, pp.\  5719--5730. PMLR, 2021.

\bibitem[Krizhevsky et~al.(2009)Krizhevsky, Hinton, et~al.]{krizhevsky2009learning}
Krizhevsky, A., Hinton, G., et~al.
\newblock Learning multiple layers of features from tiny images.
\newblock 2009.

\bibitem[Kullback \& Leibler(1951)Kullback and Leibler]{kullback1951information}
Kullback, S. and Leibler, R.~A.
\newblock On information and sufficiency.
\newblock \emph{The annals of mathematical statistics}, 22\penalty0 (1):\penalty0 79--86, 1951.

\bibitem[Laguna et~al.(2019)Laguna, Salazar, and Sagar]{laguna2019entropic}
Laguna, H.~G., Salazar, S.~J., and Sagar, R.~P.
\newblock Entropic kullback-leibler type distance measures for quantum distributions.
\newblock \emph{International Journal of Quantum Chemistry}, 119\penalty0 (19):\penalty0 e25984, 2019.

\bibitem[Li et~al.(2024)Li, Dulal, Ohorodnikov, Wang, and Ding]{li2024efficient}
Li, D., Dulal, D., Ohorodnikov, M., Wang, H., and Ding, Y.
\newblock Efficient quantum gradient and higher-order derivative estimation via generalized hadamard test.
\newblock \emph{arXiv preprint arXiv:2408.05406}, 2024.

\bibitem[liang et~al.(2021)liang, Wu, Li, Wang, Meng, Qin, Chen, Zhang, and Liu]{wu2021rdrop}
liang, x., Wu, L., Li, J., Wang, Y., Meng, Q., Qin, T., Chen, W., Zhang, M., and Liu, T.-Y.
\newblock R-drop: Regularized dropout for neural networks.
\newblock In Ranzato, M., Beygelzimer, A., Dauphin, Y., Liang, P., and Vaughan, J.~W. (eds.), \emph{Advances in Neural Information Processing Systems}, volume~34, pp.\  10890--10905. Curran Associates, Inc., 2021.

\bibitem[Liang et~al.(2022)Liang, Wang, Cheng, Ding, Ren, Gao, Hu, Boning, Qian, Han, et~al.]{liang2022variational}
Liang, Z., Wang, H., Cheng, J., Ding, Y., Ren, H., Gao, Z., Hu, Z., Boning, D.~S., Qian, X., Han, S., et~al.
\newblock Variational quantum pulse learning.
\newblock In \emph{2022 IEEE International Conference on Quantum Computing and Engineering (QCE)}, pp.\  556--565. IEEE, 2022.

\bibitem[Liang et~al.(2024)Liang, Cheng, Ren, Wang, Hua, Song, Ding, Chong, Han, Qian, et~al.]{liang2024napa}
Liang, Z., Cheng, J., Ren, H., Wang, H., Hua, F., Song, Z., Ding, Y., Chong, F.~T., Han, S., Qian, X., et~al.
\newblock Napa: intermediate-level variational native-pulse ansatz for variational quantum algorithms.
\newblock \emph{IEEE Transactions on Computer-Aided Design of Integrated Circuits and Systems}, 2024.

\bibitem[Lin(1991)]{lin1991divergence}
Lin, J.
\newblock Divergence measures based on the shannon entropy.
\newblock \emph{IEEE Transactions on Information theory}, 37\penalty0 (1):\penalty0 145--151, 1991.

\bibitem[Liu \& Wang(2016)Liu and Wang]{liu2016stein}
Liu, Q. and Wang, D.
\newblock Stein variational gradient descent: A general purpose bayesian inference algorithm.
\newblock \emph{Advances in neural information processing systems}, 29, 2016.

\bibitem[Liu(2019)]{liu2019roberta}
Liu, Y.
\newblock Roberta: A robustly optimized bert pretraining approach.
\newblock \emph{arXiv preprint arXiv:1907.11692}, 364, 2019.

\bibitem[MacKay(2003)]{mackay2003information}
MacKay, D.~J.
\newblock \emph{Information theory, inference and learning algorithms}.
\newblock Cambridge university press, 2003.

\bibitem[Men{\'e}ndez et~al.(1997)Men{\'e}ndez, Pardo, Pardo, and Pardo]{menendez1997jensen}
Men{\'e}ndez, M.~L., Pardo, J., Pardo, L., and Pardo, M.
\newblock The jensen-shannon divergence.
\newblock \emph{Journal of the Franklin Institute}, 334\penalty0 (2):\penalty0 307--318, 1997.

\bibitem[Moreno et~al.(2003)Moreno, Ho, and Vasconcelos]{moreno2003kullback}
Moreno, P., Ho, P., and Vasconcelos, N.
\newblock A kullback-leibler divergence based kernel for svm classification in multimedia applications.
\newblock \emph{Advances in neural information processing systems}, 16, 2003.

\bibitem[Mottonen et~al.(2004)Mottonen, Vartiainen, Bergholm, and Salomaa]{mottonen2004transformation}
Mottonen, M., Vartiainen, J.~J., Bergholm, V., and Salomaa, M.~M.
\newblock Transformation of quantum states using uniformly controlled rotations.
\newblock \emph{arXiv preprint quant-ph/0407010}, 2004.

\bibitem[Nielsen \& Chuang(2010)Nielsen and Chuang]{nielsen2010quantum}
Nielsen, M.~A. and Chuang, I.~L.
\newblock \emph{Quantum computation and quantum information}.
\newblock Cambridge university press, 2010.

\bibitem[Peng et~al.(2024)Peng, Li, Liang, and Wang]{peng2024qsco}
Peng, Y., Li, X., Liang, Z., and Wang, Y.
\newblock Qsco: A quantum scoring module for open-set supervised anomaly detection.
\newblock \emph{arXiv preprint arXiv:2405.16368}, 2024.

\bibitem[Pham \& Le(2021)Pham and Le]{pham2021autodropout}
Pham, H. and Le, Q.
\newblock Autodropout: Learning dropout patterns to regularize deep networks.
\newblock In \emph{Proceedings of the AAAI conference on artificial intelligence}, volume~35, pp.\  9351--9359, 2021.

\bibitem[Preskill(2018)]{preskill2018quantum}
Preskill, J.
\newblock Quantum computing in the nisq era and beyond.
\newblock \emph{Quantum}, 2:\penalty0 79, 2018.

\bibitem[Robbins \& Monro(1951)Robbins and Monro]{robbins1951stochastic}
Robbins, H. and Monro, S.
\newblock A stochastic approximation method.
\newblock \emph{The annals of mathematical statistics}, pp.\  400--407, 1951.

\bibitem[Roberts \& Rosenthal(2004)Roberts and Rosenthal]{roberts2004general}
Roberts, G.~O. and Rosenthal, J.~S.
\newblock General state space markov chains and mcmc algorithms.
\newblock 2004.

\bibitem[Shannon(1948)]{shannon1948mathematical}
Shannon, C.~E.
\newblock A mathematical theory of communications.
\newblock \emph{Bell system technical journal}, 27:\penalty0 379--423, 1948.

\bibitem[Srivastava et~al.(2014)Srivastava, Hinton, Krizhevsky, Sutskever, and Salakhutdinov]{srivastava2014dropout}
Srivastava, N., Hinton, G., Krizhevsky, A., Sutskever, I., and Salakhutdinov, R.
\newblock Dropout: a simple way to prevent neural networks from overfitting.
\newblock \emph{The journal of machine learning research}, 15\penalty0 (1):\penalty0 1929--1958, 2014.

\bibitem[Tompson et~al.(2015)Tompson, Goroshin, Jain, LeCun, and Bregler]{tompson2015efficientspatialdrop}
Tompson, J., Goroshin, R., Jain, A., LeCun, Y., and Bregler, C.
\newblock Efficient object localization using convolutional networks.
\newblock In \emph{Proceedings of the IEEE conference on computer vision and pattern recognition}, pp.\  648--656, 2015.

\bibitem[Van~Erven \& Harremos(2014)Van~Erven and Harremos]{van2014renyi}
Van~Erven, T. and Harremos, P.
\newblock R{\'e}nyi divergence and kullback-leibler divergence.
\newblock \emph{IEEE Transactions on Information Theory}, 60\penalty0 (7):\penalty0 3797--3820, 2014.

\bibitem[Vedral(2002)]{vedral2002role}
Vedral, V.
\newblock The role of relative entropy in quantum information theory.
\newblock \emph{Reviews of Modern Physics}, 74\penalty0 (1):\penalty0 197, 2002.

\bibitem[Von~Neumann(2013)]{von2013mathematische}
Von~Neumann, J.
\newblock \emph{Mathematische grundlagen der quantenmechanik}, volume~38.
\newblock Springer-Verlag, 2013.

\bibitem[Wager et~al.(2013)Wager, Wang, and Liang]{wager2013dropout}
Wager, S., Wang, S., and Liang, P.~S.
\newblock Dropout training as adaptive regularization.
\newblock \emph{Advances in neural information processing systems}, 26, 2013.

\bibitem[Wang(2018)]{wang2018glue}
Wang, A.
\newblock Glue: A multi-task benchmark and analysis platform for natural language understanding.
\newblock \emph{arXiv preprint arXiv:1804.07461}, 2018.

\bibitem[Wang et~al.(2022)Wang, Ding, Gu, Lin, Pan, Chong, and Han]{wang2022quantumnas}
Wang, H., Ding, Y., Gu, J., Lin, Y., Pan, D.~Z., Chong, F.~T., and Han, S.
\newblock Quantumnas: Noise-adaptive search for robust quantum circuits.
\newblock In \emph{2022 IEEE International Symposium on High-Performance Computer Architecture (HPCA)}, pp.\  692--708. IEEE, 2022.

\bibitem[Wilde(2013)]{wilde2013quantum}
Wilde, M.~M.
\newblock \emph{Quantum information theory}.
\newblock Cambridge university press, 2013.

\end{thebibliography}
\bibliographystyle{icml2025}

\newpage
\appendix
\onecolumn



\end{document}